\newcommand{\nonr}{\nonumber}
\newcommand{\revcol}{\color{black}}  
\newcommand{\revcoltwo}{\color{black}}  
\newcommand{\beq}{\begin{equation}} 
\newcommand{\eeq}{\end{equation}} 
\newcommand{\beqa}{\begin{eqnarray}}
\newcommand{\eeqa}{\end{eqnarray}}
\newcommand{\vsig}{\mathbf{x}}  
\newcommand{\smpMat}{\mathbf{A}}
\newcommand{\mT}{\mathbf{T}} 
\newcommand{\sS}{\mathscr{S}}
\newcommand{\mG}{\mathbf{G}}
\newcommand{\Exp}{\mathbb{E}}
\newcommand{\bR}{\mathbb{R}}
\newcommand{\mP}{\mathbf{P}}
\newcommand{\cN}{\mathcal{N}}
\newcommand{\cT}{\mathcal{T}}
\newcommand{\cC}{\mathcal{C}}
\newcommand{\vy}{\mathbf{y}}
\newcommand{\vu}{\mathbf{u}}
\newcommand{\vs}{\mathbf{s}}
\newcommand{\vt}{\mathbf{t}}
\newcommand{\vz}{\mathbf{z}}
\newcommand{\vw}{\mathbf{w}}
\newcommand{\mD}{\mathbf{D}}
\newcommand{\mX}{\mathbf{X}}
\newcommand{\mI}{\mathbf{I}}
\newcommand{\mZero}{\mathbf{0}}
\newcommand{\Phib}{\mbox{\boldmath$\Phi$\unboldmath}}
\newcommand{\Omegab}{\mbox{\boldmath$\Omega$\unboldmath}}
\newcommand{\varepsb}{\mbox{\boldmath$\varepsilon$\unboldmath}}
\newcommand{\trace}{\mathrm{trace}}
\newcommand{\Nb}{N_b}
\newcommand{\diag}{\mathrm{diag}}
\newtheorem{thm}{Theorem}%[section]
\newtheorem{prop}{Proposition}
\newtheorem{cor}{Corollary}
\newcommand{\mH}{\mathbf{H}}
\newcommand{\mU}{\mathbf{U}}
\newcommand{\mA}{\mathbf{A}}
\newcommand{\mV}{\mathbf{V}}
\newcommand{\vx}{\mathbf{x}}
\newcommand{\tr}{T}
\newcommand{\nk}{L}
\newcommand{\pfa}{P_{\mbox{\scriptsize FA}}}
\newcommand{\pdet}{P_{\mbox{\scriptsize D}}}
\newcommand{\Qfun}{\mathbb{Q}}
\newcommand{\pr}{\pi}
\newcommand{\vone}{\mathbf{1}}
\newcommand{\vzero}{\mathbf{0}}
\newcommand{\cH}{\mathcal{H}}
\newcommand{\cR}{\mathcal{R}}
\newcommand{\cChi}{\mathcal{X}}
\newcommand{\cM}{\mathcal{M}}
\newcommand{\mLambda}{\mbox{\boldmath$\Lambda$\unboldmath}}
\newcommand{\mPsi}{\mbox{\boldmath$\Psi$\unboldmath}}
\newcommand{\veps}{\mbox{\boldmath$\epsilon$\unboldmath}}
\newcommand{\bP}{\mathbb{P}}
\newcommand{\mPhi}{\mbox{\boldmath$\Phi$\unboldmath}}
\newcommand{\vphi}{\mbox{\boldmath$\phi$\unboldmath}}
\newcommand{\vfi}{\mbox{\boldmath$\psi$\unboldmath}}
\newcommand{\vwp}{\vw^{\phi}}
\newcommand{\sI}{\mathcal{I}}
\newcommand{\vxx}{\underline{\vx}}
\begin{document}
\title{Compressive Detection of Random Subspace Signals}
%\author{\IEEEauthorblockN{Alireza Razavi$^\dagger$, Mikko Valkama$^\dagger$, Danijela Cabric$^\ddagger$}\\
%\IEEEauthorblockA{$^\dagger$Department of Electronics and Communications Engineering, Tampere University of Technology, Tampere, Finland\\
%$^\ddagger$CORES Lab., University of California, Los Angeles (UCLA), USA \\
%email: alireza.razavi@tut.fi, mikko.e.valkama@tut.fi, danijela@ee.ucla.edu
%}}
\author{\IEEEauthorblockN{Alireza Razavi, Mikko Valkama, and Danijela Cabric}
\thanks{A. Razavi and M. Valkama are with the Department of Electronics and Communications Engineering, Tampere University of Technology (TUT), Tampere, Finland. emails: alireza.razavi@tut.fi, mikko.e.valkama@tut.fi.}% <-this % stops a space
\thanks{D. Cabric is with Cognitive Reconfigurable Embedded Systems Lab (CORES), University of California Los Angeles (UCLA), CA. email:danijela@ee.ucla.edu.}% <-this % stops a space
%\thanks{Manuscript received April 19, 2005; revised September 17, 2014.}
}

\maketitle

\begin{abstract}
The problem of compressive detection of random subspace signals is studied. We consider signals modeled as $\vs = \mH \vx$ where $\mH$ is an $N \times K$ matrix with $K \le N$ and $\vx \sim \cN(\vzero_{K,1},\sigma_x^2 \mI_K)$. We say that signal $\vs$ lies in or leans toward a subspace if the largest eigenvalue of $\mH \mH^\tr$ is strictly greater than its smallest eigenvalue. We first design a measurement matrix $\mPhi=[\mPhi_s^\tr,\mPhi_o^\tr]^\tr$ comprising of two sub-matrices $\mPhi_s$ and $\mPhi_o$ where $\mPhi_s$ projects the signal to the strongest left-singular vectors, i.e., the left-singular vectors corresponding to the largest singular values, of subspace matrix $\mH$ and $\mPhi_o$ projects it to the weakest left-singular vectors. We then propose two detectors which work based on the difference in energies of the samples measured by the two sub-matrices $\mPhi_s$ and $\mPhi_o$ {\revcoltwo and provide theoretical proofs for their optimality}. Simplified versions of the proposed detectors for the case when the variance of noise is known are also provided. Furthermore, we study the performance of the detector when measurements are imprecise and show how imprecision can be compensated by employing more measurement devices. The problem is then re-formulated for the generalized case when the signal lies in the union of a finite number of linear subspaces instead of a single linear subspace. Finally, we study the performance of the proposed methods by simulation examples.
\end{abstract}

\begin{IEEEkeywords}
Compressive Detection, Random Subspace Signals, Hypothesis Testing, Unknown Noise Variance, $F$-distribution
\end{IEEEkeywords}
%%%%%%%%%%%%%%%%%%%%%%%%%%%%%%%%%%%%%%%%%%%%%%%%%%%%%%%%%%%%%%%%%%%%%%%%%%%%%%%%%%%%%%%%%%%%%%%%%%%%%%%%%%%%%%%%%%%%%%%%%%%%%%%
%%%%%%%%%%%%%%%%%%%%%%%%%%%%%%%%%%%%%%%%%%%%%%%%%%%%                     INTRODUCTION                 %%%%%%%%%%%%%%%%%%%%%%%%%%%%%%%%%%%%%%%%%%%%%%%%%%%%%%
%%%%%%%%%%%%%%%%%%%%%%%%%%%%%%%%%%%%%%%%%%%%%%%%%%%%%%%%%%%%%%%%%%%%%%%%%%%%%%%%%%%%%%%%%%%%%%%%%%%%%%%%%%%%%%%%%%%%%%%%%%%%%%%

\section{Introduction}
\label{sec:intro}
\IEEEPARstart{T}{he} topic of Compressive Sensing (CS), where the effort is to draw inferences about the sparse signals based on compressive samples, has been one of the most attractive topics in the area of signal processing for more than a decade. 
Compressive Sensing states that if a signal is sparse in some known bases, then it is possible to reconstruct it from a relatively few linear projections \cite{donoho06, candes06, candes08}. While most of the efforts in the area of CS has been devoted to the problem of estimation of a sparse signal from possibly noisy compressive measurements, there have been also some efforts in accomplishing other signal processing tasks such as Compressive Detection \cite{davenport10, davenport06, haupt07detection, paredes09, wang08, zahedi12CD, azizyan12, rao12detection, arias12}, Compressive Classification \cite{haupt06, duarte07, reboredo13,davenport10, davenport07}, etc. The goal of this paper is to study the problem of detecting a random subspace signal from its compressive measurements.

Signal detection \cite{kay_volII} is an important task in Statistical Signal Processing where the goal is to decide on the presence/absence of a signal rather than estimating it. Signal Detection has applications in many areas of engineering and science, including wireless communications, radar and sonar, bioinformatics, etc. The literature on detection from classical uncompressed samples is very vast and well-studied; see, e.g., \cite{kay_volII, poor94detection, scharf-book} and references therein. With the advent of compressive sampling methods, like many other fields, there have been additional efforts to tailor the existing signal detection techniques to the case where the samples are taken at compressive rates. The first criterion that a signal has to meet to qualify for being detectable based on compressive samples is to be {\it sparse} in some known basis, or in a broader sense to lie in a {\it low-dimensional subspace} of the higher-dimensional ambient space. {\revcol The applications of such compressive detection techniques are in the scenarios where uncompressed sampling is difficult, expensive, or even harmful, but because of the existence of a structure in the signal we are still able to sample the signal in compressive rate and then detect it based on the compressive samples. Examples of such scenarios are compressive spectrum sensing of OFDM signals \cite{razavi13,razaviSP14}, radar sensing \cite{baransky14,yu11,ahmed10,bar14}, hyperspectral imaging \cite{paris13}, ultrasound imaging \cite{wagner12}, etc. We remark here that the hardware implementation of specific compressive samplers is beyond the scope of this paper and we only focus on finding the optimal mathematical model or framework of such samplers, however examples of hardware implementations in various applications can be found in \cite{duarte08single, dixon12, itzhak13, wagner12, baransky14, mishali10, xampling, tropp10}.
}

In this paper we address the problem of compressive detection of signals that lie on or close to a low dimensional linear subspace. We start with the simplest case when the signal is drawn from a known low-dimensional subspace and propose two tests for detection of signal from compressive measurements. The main advantage of the proposed algorithms over the existing methods is that they do not need the knowledge of the noise variance to work. However, we also provide the simplified versions of the proposed methods for the case when the noise variance is known. We then study the effect of imprecise measurements and show that the imprecision can be compensated by employing more measurement devices. Then the proposed method is re-formulated for the case when the signal is lying on the union of a finite number of known linear subspaces with each of subspaces having a certain probability of being the subspace on which the signal truly lies. In other words, in this case, instead of having the exact knowledge of the subspace, we know the set of all subspaces from which the signal may emerge together with the corresponding probabilities that the signal may lie on each of them. We will introduce the optimum sampling strategy in this case and show that although the performance falls by increasing the number of possible subspaces, it remains reasonable as long as the cardinality of the set is much smaller than the ambient dimension.

%{\bf Related work:} The problem of signal detection from compressive measurements have been already studied in the signal processing community \cite{davenport10, davenport06, paredes09, wang08, zahedi12CD, azizyan12,rao12detection}. The works in  \cite{davenport10, davenport06} introduced and studied the compressive detection problem but they did not address the case of subspace signals. In \cite{paredes09,wang08} the authors investigated the problem of compressive detection of subspace signals but both of the proposed methods need the knowledge of the noise variance for their methods to work and furthermore they study only known signals.  

{\bf Related work:} In \cite{davenport10, davenport06, haupt07detection, arias12} authors studied the problem of compressive detection, but they formulated the problem for deterministic agnostic signals rather than random subspace signals. Besides, their methods did not take into consideration the unknown noise variance scenario. The method in \cite{paredes09} designs a matched subspace detector for subspace signals.  The method assumes that the exact knowledge of signal is available for detector design. They extend their work to unknown signals in \cite{wang08} but like the previous ones they need the variance of noise to design the compressive detector. The work of \cite{azizyan12} provided upper bounds for probabilities of false alarm and mis-detection for deterministic signals while the variance of noise is again assumed known. Finally, in \cite{rao12detection} authors studied the problem of compressive detection of random subspace signals but similar to the previously mentioned works the noise variance was assumed known and the signal was assumed agnostic. 

{\bf Notations and Mathematical Preliminaries:} Throughout this paper, all quantities are assumed real-valued while matrices and vectors are denoted by capital and small boldface letters, respectively. $=$ denotes the equality and $\triangleq$ denotes the definition.
$\Exp_X$ is reserved for statistical expectation operator with respect to the random variable $X$, and $\otimes$ denotes the Kronecker product.
$\mI_P$,  $\vone_{P,Q}$ and $\vzero_{P,Q}$ represent
$P \times P$ identity matrix, and $P \times Q$ all-one and all-zero matrices, respectively. The set of real numbers is denoted by $\bR$, the set of real-valued $M \times 1$ vectors is denoted by $\bR^M$, the set of real-valued $M \times N$ matrices is denoted by $\bR^{M \times N}$, and for an arbitrary matrix $\smpMat \in \bR^{M \times N}$, $[\smpMat]_{i,j}$ denotes the $(i,j)$-th entry of the matrix. For a set $\cM$, cardinality of the set is shown as $|\cM|$.
For a matrix $\mA$ we show its $i$-th eigenvalue by $\lambda_i(\mA)$. We also use the notations $\lambda_{\mathrm{max}}(\mA)$ and $\lambda_{\mathrm{min}}(\mA)$ for denoting its maximum and minimum eigenvalues, respectively. For a vector $\vx$, its sub-vector containing entries from $i$ to $j>i$ is denoted by $\vx\Big|_{i:j}$.
% and $\va_i,~i=1,2,\ldots,N$ denotes its $i$-th column. $\vect(\mA)$ is the vectorization of $\mA$ defined as $\vect(\mA)\triangleq[\va_1^T,\ldots,\va_N^T]^T$. 

{\bf Paper Organization:} The rest of this paper is organized as follows: first, in Section \ref{sec:sysmodel} we explain the basic model of a linear subspace signal and we introduce the sampling strategy that we choose for our compressive detector. Then based on the introduced sampling strategy, in Section \ref{sec:2} we introduce two compressive detectors for detection of the signal based on compressive samples and provide theoretical results about their optimality. We also briefly study the simplification when the noise variance is known as well as the case of subspace interference. Next, in Section \ref{sec:nonideal}, we study the effect of imprecise measurements where, e.g. because of hardware design limitations, the measurement matrix may deviate from the ideal one introduced in Section \ref{sec:2}.  In Section \ref{sec:union}, we extend our design to the case where instead of knowing the exact linear subspace on which signal lies, we have only more coarse knowledge about the signal location in the ambient space in the form of knowing all the possibilities for the true subspace together with their corresponding probabilities.  The performances of the proposed methods are studied through computer simulations  in Section \ref{sec:simulation}. Finally we conclude the paper in Section \ref{sec:conclusion}.

% %%%%%%%%%%%%%%%%%%%%%%%%%%%%%%%%%%%%%%%%%%%%%%%%%%%
% %%%%%%%%%%%%%%%%%%%          SECTION II             %%%%%%%%%%%%%%%%%%%
% %%%%%%%%%%%%%%%%%%%%%%%%%%%%%%%%%%%%%%%%%%%%%%%%%%%

\section{System Model}
\label{sec:sysmodel}
{\revcol
Consider first a noiseless subspace signal of the form
\beq
\vs[n] \triangleq \mH \vx[n],~~n=1,\ldots,\Nb,
\label{eq:sysModel}
\eeq
where $\vx[n] \sim \cN(\mZero_{K,1},\sigma_x^2 \mI_K)$ and $\mH$  is an $N \times K$ full-rank deterministic matrix with $K \le N$.
%an $N \times 1$ signal $\vy[n],n=1,\ldots,\Nb,$ formed as a linear combination of a random subspace signal $\mH \vx[n]$, where $\vx[n] \sim \cN(\mZero_{K,1},\sigma_x^2 \mI_K)$, and a stationary noise vector $\veps[n] \sim \cN(\mZero,\sigma_0^2 \mI_N)$: 
%\beq
%\vy[n] = \mH \vx[n] + \veps[n],
%\label{eq:sysModel}
%\eeq
%where $\mH$  is an $N \times K$ deterministic matrix with $K \le N$. 
We denote the $N$ nonnegative eigenvalues of $\mH \mH^\tr$ by $\rho_1^2 \ge \rho_{2}^2 \ge \ldots \ge \rho_N^2 \ge 0$ and we further assume that
\beq
\rho_1^2>\rho_N^2.
\label{ineq:condnum}
\eeq
Inequality (\ref{ineq:condnum}) implies that the signal energy is not distributed uniformly over all dimensions, or in other words, the signal lies in (or at least leans toward) some subspace of the ambient space.

In a classical uncompressed scenario, by processing a sequence of noisy observations $\vy[n], n=1,\ldots,\Nb$, the signal detection problem refers to the following hypothesis testing
\beqa
\left\{ \begin{array}{lll} \cH_0: \vy[n] = \vw[n],&~&n=1,2,\ldots,\Nb, \\ 
\cH_1: \vy[n] =  \vs[n] + \vw[n],&~&n=1,2,\ldots,\Nb, \end{array} \right.
\label{eq:test0}
\eeqa
in which $\vw[n]\sim \cN(\mZero_{N,1},\mI_N)$ refers to observation noise. However, as mentioned in the previous section, obtaining uncompressed observations is sometimes difficult or expensive, and therefore we have to perform the signal detection task through some noisy compressive observations $\vz[n] \triangleq [z_1[n],z_2[n],\ldots,z_M[n]]^T, n=1,2,\ldots,\Nb$, rather than unavailable uncompressed samples $\vy[n], n=1,2,\ldots,\Nb$. Here $z_m[n],~m=1,2,\ldots,M$ denotes the output of $m$-th compressive measurement unit. Thus, we redefine the hypothesis testing problem as
%\beqa
%\left\{ \begin{array}{lll} \cH_0: z_m[n] = \vphi_m^T \vw_m[n],&~&m=1,2,\ldots,M,~n=1,2,\ldots,\Nb, \\ 
%\cH_1: z_m[n] = \vphi_m^T \vs[n]+\vphi_m^T \vw_m[n],&~&m=1,2,\ldots,M,~n=1,2,\ldots,\Nb, \end{array} \right.
%\label{eq:test1}
%\eeqa
\beqa
\left\{ \begin{array}{lll} \cH_0: \vz[n] = \vwp[n],&~&n=1,2,\ldots,\Nb, \\ 
\cH_1: \vz[n] = \mPhi \vs[n]+ \vwp[n],&~&n=1,2,\ldots,\Nb, \end{array} \right.
\label{eq:test1}
\eeqa 
where $\mPhi \triangleq [\vphi_1,\vphi_2,\ldots,\vphi_M]^T \in \bR^{M \times N}$ is compressive measurement matrix with $M \ll N$ and $\vwp[n]\triangleq[\vphi_1^T \vw_1[n],\vphi_2^T \vw_2[n],\ldots,\vphi_M^T \vw_M[n]]^T$ where $\vw_m[n] \sim \cN(\mZero_{N,1},\mI_N)$ represents the uncompressed receiver noise at the input of the $m$-th compressive measurement unit. We have used different indices for noise vectors as we assume that these measurement units work independently and posses independent noise at their inputs, which means that $\vw_{m_1}[n]$ and $\vw_{m_2}[n]$ are statistically independent for $m_1 \neq m_2$.

This problem was addressed in \cite{paredes09,wang08} in presence of noise with known variance and for deterministic signals. In this paper, we design compressive detectors without assuming prior noise knowledge. To this end, we partition the $M$ measurement devices to two sets or, in other words, we consider
 the measurement matrix $\mPhi$ to consist of two sub-matrices $\mPhi_s \in \bR^{M_1 \times N}$ and $\mPhi_o \in \bR^{M_2 \times N}$ as
 \beq
\mPhi=   \left[ \begin{array}{c}   \mPhi_s \\  \mPhi_o  \end{array} \right],
%\left[ \begin{array}{c}  \frac{1}{\sqrt{M_1}} \mT_1 \mU_{1,1}^\tr \\ \frac{1}{\sqrt{M_2}} \mT_2 \mU_{0}^\tr \end{array} \right],
\label{measMat}
\eeq}
% While the rows of $\mPhi_s$ lie in the $M_1$-dimensional subspace spanned by the $M_1$ strongest (in terms of magnitude of the corresponding singular values) left singular vectors of $\mH$, the $M_2$ rows of sub-matrix  $\mPhi_o$ lie in the $M_2$-dimensional subspace spanned by the $M_2$ weakest left singular vectors of $\mH$. We demonstrate the corresponding outputs of these two sets of measurement devices by $\vz_s=\mPhi_s \vy$ and $\vz_o=\mPhi_o \vy$, respectively.
%When the signal exist, i.e. $\vx \neq \vzero$, the above design results in big difference between the variance of the measurements $\vz_s$ and $\vz_o$. On the other hand, when the signal does not exist, $\vz_s$ and $\vz_o$ have similar variances. This difference in the variance of the outputs of the two sub matrices is what we will exploit to design compressive detectors in the next section.
{\revcol 
where $M_1+M_2=M$. The idea behind having two sets of samplers, $\mPhi_s$ and $\mPhi_o$, is to get two sets of measurements 
\beq
\vz_s[n] \triangleq \vz[n]\Big|_{1:M_1},
\label{eq:vzs}
\eeq 
and 
\beq
\vz_o[n]\triangleq \vz[n]\Big|_{M_1+1:M},
\label{eq:vzo}
\eeq
whose selected statistics are identical under null hypothesis but different under alternative hypothesis. The statistics of the two sets of measurements can then be compared to decide whether we should accept the null hypothesis $\cH_0$ or reject it in favor of alternative hypothesis $\cH_1$. In the next section, we provide two designs for $\mPhi_s$ and $\mPhi_o$ that fulfill this goal and then propose detectors for each design to perform the hypothesis testing problem of (\ref{eq:test1}) based on two sets of compressive measurements $\vz_s[n]$ and $\vz_o[n], n=1,2,\ldots,\Nb$.
}

 %in the noise or the null subspace of matrix $\mH$ and. Whether the signal $\vx[n]$ exists or not, the first sub-matrix always provides a measure of the power of noise. But the situation for the second sub-matrix is different. If the signal does not exists, then this part will work similar to the first part. On the other hand, if the subspace signal exists then the measurements corresponding to this sub-matrix have the effect of both signal and noise and as a result attain a higher variance compared to the output of $\Phi_o$. This difference in the variance of the output of the two sub matrices is what we will exploit to design a compressive detector.

 % %%%%%%%%%%%%%%%%%%%%%%%%%%%%%%%%%%%%%%%%%%%%%%%%%%%
% %%%%%%%%%%%%%%%%%%%          SECTION III            %%%%%%%%%%%%%%%%%%%
% %%%%%%%%%%%%%%%%%%%%%%%%%%%%%%%%%%%%%%%%%%%%%%%%%%%

\section{Compressive Subspace Detection}
\label{sec:2}

Based on the system model of the previous section, in this section we propose two detectors for performing the hypothesis testing problem of (\ref{eq:test1}). Both detectors are composed of two sets of measurement devices, represented by  $\mPhi_s$ and $\mPhi_o$. {\revcoltwo In the sequel, we introduce these two detectors, derive their performance measures, and prove their optimality. The numerical study of the performance of the two proposed detectors will be provided in Section \ref{sec:simulation} through simulation experiments. }

\subsection{Maximally-Uncorrelated Compressive Detector}
\label{sec:maxrank}

 Assume that the SVD of matrix $\mH \in \bR^{N \times K}$ is given by
\beq
\mH = \mU \mD \mV^\tr,
\label{subspaceMat}
\eeq
where $\mathbf{U} \in \mathbb{R}^{N \times N}$ and $\mathbf{V} \in \mathbb{R}^{K \times K}$ are orthogonal matrices, and $\mathbf{D} \in \mathbb{R}^{N \times K}$ is a diagonal matrix.
Consider positive integers $M_1$ and $M_2$ such that $M=M_1+M_2$.  Let us denote the first $M_1$ columns of $\mU$ by $\mU_{s}$ and its last $M_2$ by $\mU_{o}$. We propose the following design for $M \times N$ measurement matrix $\mPhi=\left[ \begin{array}{c}   \mPhi_s \\  \mPhi_o  \end{array} \right]$ with sub-matrices $\mPhi_s$ and $\mPhi_o$ defined as
\beq
\mPhi_s \triangleq \frac{1}{\sqrt{M}} \mT_s \mU_{s}^\tr,
\label{eq:measS}
\eeq
and 
\beq
\mPhi_o \triangleq \frac{1}{\sqrt{M}} \mT_o \mU_{o}^\tr,
\label{eq:measO}
\eeq
where $\mT_s$ and $\mT_o$ are $M_1 \times M_1$ and $M_2 \times M_2$ orthogonal matrices, respectively. The term $\frac{1}{\sqrt{M}}$ in (\ref{eq:measS}) and (\ref{eq:measO}) is to guarantee that the received energy at the output of the compressive sampler is independent of the number of samplers. This makes the study of the effect of the number of samplers on the performance fair as otherwise it is obvious that the bigger is the number of samplers the better is the performance. We remark again that the $M$ rows of the measurement matrix $\mPhi$ represent $M$ measurement devices equipped to collect samples. Hereafter, we use the terms {\it measurement matrix} and {\it measurement devices} interchangeably.

The reason for choosing this design for measurement matrix is that the rows of $\mPhi_s$ span the $M_1$-dimensional subspace which contains the highest amount of energy of signal $\vs=\mH \vx$, or in other words, is the $M_1$-dimensional subspace along which the variance of $\vs$ is maximized. Furthermore, the rows of $\mPhi_o$ span the $M_2$-dimensional subspace which contains the lowest amount of energy of signal $\vs$, or in other words, is the $M_2$-dimensional subspace along which the variance of $\vs$ is minimized. The difference between the energy of signal taken at the output of these two sub-matrices can then be exploited for signal detection. The following theorem, justifies the above discussion.

\begin{thm}
The measurement sub-matrix $\mPhi_s$ in (\ref{eq:measS}) is the solution to the following optimization problem
\beq
\arg \max_{\mPhi_s} \Exp_{\vx,\vw}( \| \vz_s\|_2^2),~~\mathrm{s.t.}~~M \mPhi_s \mPhi_s^\tr=\mI_{M_1},
%\arg \max_{\mPhi} \sum_{i=1}^{M_1} \mathrm{var}(z_{s,i}),~~\mathrm{s.t.}~~\mPhi^\tr \mPhi=\mI_{M_1},
\label{eq:th0eq1}
\eeq
%where $z_{s,i}\triangleq [\vz_s]_{i}$. 
In other words, the rows of $\mPhi_s$ in (\ref{eq:measS}) represent the set of $M_1$ {\it uncorrelated} measurement devices that maximize the expected value of total energy (or the total variance) in their outputs.

Correspondingly, the measurement sub-matrix $\mPhi_o$ in (\ref{eq:measO}) is the solution to the following optimization problem
\beq
%\arg \min_{\mPhi} \sum_{i=1}^{M_2} \mathrm{var}(z_{o,i}),~~\mathrm{s.t.}~~\mPhi^\tr \mPhi=\mI_{M_2},
\arg \min_{\mPhi_o}  \Exp_{\vx,\vw}( \| \vz_o\|_2^2),~~\mathrm{s.t.}~~M \mPhi_o \mPhi_o^\tr=\mI_{M_2},
\label{eq:th0eq2}
\eeq
In other words, the rows of $\mPhi_o$ in (\ref{eq:measO}) represent the set of $M_2$ {\it uncorrelated} measurement devices that minimize the expected value of total energy (or the total variance) in their outputs.
\label{thm0}
\end{thm}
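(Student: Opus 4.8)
The plan is to evaluate the objective $\Exp_{\vx,\vw}(\|\vz_s\|_2^2)$ in closed form and show that, once the orthogonality constraint is imposed, it collapses to a single trace that is maximized precisely by the proposed $\mPhi_s$. First I would write, under $\cH_1$, $\vz_s=\mPhi_s\vs+\vwp_s$ with $\vs=\mH\vx$ and $\vwp_s$ the first $M_1$ entries of $\vwp$. Using the independence of signal and noise together with the zero-mean property of the noise (which kills the cross term) I would split the expectation into a signal and a noise contribution,
\beq
\Exp_{\vx,\vw}(\|\vz_s\|_2^2) = \sigma_x^2\,\trace\!\big(\mPhi_s \mH \mH^\tr \mPhi_s^\tr\big) + \trace\!\big(\mPhi_s \mPhi_s^\tr\big),
\eeq
where the signal term uses $\Exp_\vx(\vx\vx^\tr)=\sigma_x^2\mI_K$ and the cyclic property of the trace, and the noise term uses $\Exp(\vw_m\vw_m^\tr)=\mI_N$, so that measurement $m$ contributes $\|\vphi_m\|_2^2$ and the total is $\trace(\mPhi_s\mPhi_s^\tr)$ (no independence across devices is even needed here, since $\|\vwp_s\|_2^2$ has no cross terms). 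The crucial observation is that the constraint $M\mPhi_s\mPhi_s^\tr=\mI_{M_1}$ pins the noise term to the constant $M_1/M$, so maximizing $\Exp_{\vx,\vw}(\|\vz_s\|_2^2)$ is equivalent to maximizing $\trace(\mPhi_s \mH\mH^\tr \mPhi_s^\tr)$ alone.

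Next I would set $\mB\triangleq\sqrt{M}\,\mPhi_s$, so the constraint reads $\mB\mB^\tr=\mI_{M_1}$ (orthonormal rows), and substitute the SVD to get $\mH\mH^\tr=\mU\,\mD\mD^\tr\,\mU^\tr$, a symmetric matrix whose eigenvalues are exactly $\rho_1^2\ge\cdots\ge\rho_N^2$. The change of variables $\mC\triangleq\mB\mU$ preserves the constraint, since $\mC\mC^\tr=\mB\mU\mU^\tr\mB^\tr=\mI_{M_1}$, and turns the objective into $\trace(\mC\,\mD\mD^\tr\,\mC^\tr)=\sum_{j=1}^N\rho_j^2\,d_j$, where $d_j\triangleq[\mC^\tr\mC]_{j,j}$ is the $j$-th diagonal entry of the rank-$M_1$ orthogonal projector $\mC^\tr\mC$. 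These entries satisfy $0\le d_j\le 1$ and $\sum_{j=1}^N d_j=\trace(\mC^\tr\mC)=M_1$, so the problem reduces to the linear program of distributing total mass $M_1$ across the weights $\rho_j^2$ with each weight receiving at most one unit. Because $\rho_1^2\ge\cdots\ge\rho_N^2$, the optimum takes $d_1=\cdots=d_{M_1}=1$ and the rest zero; this forces $\mC=[\mT_s\ \mZero]$ with $\mT_s$ orthogonal, and unwinding $\mB=\mC\mU^\tr$ yields $\mB=\mT_s\mU_s^\tr$, i.e. $\mPhi_s=\tfrac{1}{\sqrt M}\mT_s\mU_s^\tr$, attaining the value $\sum_{i=1}^{M_1}\rho_i^2$.

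For the minimization claim I would argue symmetrically: the identical reduction gives $\min\sum_j\rho_j^2 d_j$ over the same feasible polytope, whose optimum concentrates the mass on the smallest weights, $d_{N-M_2+1}=\cdots=d_N=1$, producing $\mPhi_o=\tfrac{1}{\sqrt M}\mT_o\mU_o^\tr$ with value $\sum_{i=N-M_2+1}^{N}\rho_i^2$.

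I expect the main obstacle to be the extremal-trace step, namely justifying that the diagonal of a rank-$M_1$ orthogonal projector ranges over exactly the polytope $\{0\le d_j\le 1,\ \sum_j d_j=M_1\}$ (a Schur--Horn / Ky Fan fact) and that the linear program is solved at the stated vertex. I would also flag a uniqueness caveat: the claimed $\mPhi_s$ is the unique maximizer only up to left multiplication by the orthogonal factor $\mT_s$, and only under a strict spectral gap $\rho_{M_1}^2>\rho_{M_1+1}^2$ (and correspondingly $\rho_{N-M_2}^2>\rho_{N-M_2+1}^2$ for $\mPhi_o$); if ties occur at the boundary, the optimal subspace is not unique and the theorem should be read as exhibiting one optimal design.
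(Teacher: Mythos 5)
Your proof is correct, and the first half — computing $\Exp_{\vx,\vw}(\|\vz_s\|_2^2)=\sigma_x^2\,\trace(\mPhi_s\mH\mH^\tr\mPhi_s^\tr)+M_1\sigma_0^2/M$ and observing that the constraint freezes the noise term so the problem reduces to an extremal trace over matrices with orthonormal rows — is exactly the paper's argument. Where you diverge is in how that extremal trace problem is resolved: the paper simply invokes the von Neumann trace inequality (the eigenvalues of $M\mPhi_s^\tr\mPhi_s$ being $M_1$ ones and $N-M_1$ zeros) and stops, whereas you give a self-contained proof by passing to $\mC=\sqrt{M}\mPhi_s\mU$, writing the objective as $\sum_j\rho_j^2\,[\mC^\tr\mC]_{jj}$, and solving the resulting linear program over the diagonal of a rank-$M_1$ projector. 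This buys an elementary, citation-free argument and makes the equality case (hence the form $\mT_s\mU_s^\tr$ of all maximizers) completely explicit, which the paper leaves implicit. One small simplification to your own plan: the step you flag as the main obstacle — that the projector diagonals range over \emph{exactly} the polytope $\{0\le d_j\le1,\ \sum_j d_j=M_1\}$ — is more than you need. The easy inclusion ($0\le d_j=\|\mC^\tr\mC\,\ve_j\|^2\cdot 1\le1$ and $\sum_j d_j=M_1$) already yields the upper bound $\sum_{i=1}^{M_1}\rho_i^2$, and the proposed $\mPhi_s$ visibly attains it; the surjectivity (Horn) direction is not required. Your caveats about uniqueness only up to the orthogonal factor $\mT_s$ and up to spectral ties at $\rho_{M_1}^2=\rho_{M_1+1}^2$ are accurate and are points the paper does not address.
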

\begin{proof}
We first notice that
\beqa
%\arg \max_{\mPhi} 
\Exp_{\vx,\vw}( \| \vz_s\|_2^2) &=&  \Exp_{\vx} (\vx^\tr \mH^\tr \mPhi_s^\tr \mPhi_s \mH \vx) + \frac{M_1 \sigma_0^2}{M} \nonr \\
&=& \sigma_x^2 \trace(\mH \mH^\tr \mPhi_s^\tr \mPhi_s) + \frac{M_1 \sigma_0^2}{M} 
\label{eq:th1proof1}
\eeqa
Noticing that the second term in the right-hand side of (\ref{eq:th1proof1}) is independent of measurement design and $\mH^\tr \mH$ and $\mPhi_s^\tr \mPhi_s$ are both symmetric matrices and that $M_1 \mPhi_s^\tr \mPhi_s$ in (\ref{eq:th0eq1}) has $M_1$  eigenvalues equal to 1 and $N-M_1$ eigenvalues equal to zero,
%and $M_2 \mPhi^\tr \mPhi$ in (\ref{eq:th0eq2}) has $M_2$ eigenvalues 1 and $N-M_2$ zero eigenvalues 
the proof of (\ref{eq:measS}) is easily concluded by employing von Neumann trace theorem\cite[P 11.4.5]{raorao},\cite[Theorem 6.77]{seber08}. The proof for (\ref{eq:measO}) is similar.
\end{proof}
Hereafter, we call $\mPhi_s$ and $\mPhi_o$ as {\it max-energy sampler} and {\it min-energy sampler}, respectively, and call the measurement design in (\ref{eq:measS}) and (\ref{eq:measO}) {\it Maximally-Uncorrelated Compressive Detector} as it is the optimum design when the measurements are uncorrelated.

Next, based on the above measurement design we introduce a test statistic for carrying out the hypothesis testing problem of (\ref{eq:test1}). The following theorem states this test and derives its performance measures.
\begin{thm}
\label{theorem1}
Consider the following test for performing the hypothesis testing problem given in (\ref{eq:test1}): 
\beq
\cT = \frac{\sum\limits_{n=1}^{\Nb} \vz_s[n]^\tr \vz_s[n]}{\sum\limits_{n=1}^{\Nb} \vz_o[n]^T \vz_o[n]}\overset{\cH_1}{\underset{\cH_0}{\gtrless}} \gamma,
\label{eq:testStat}
\eeq
where $\vz_s[n]$ and $\vz_o[n]$ are as in (\ref{eq:vzs}) and (\ref{eq:vzo}), respectively. The Probability of False Alarm, $\pfa$, for this test is computed as
\beq
\pfa=\Qfun_{F(M_1 \Nb,M_2 \Nb)}(\gamma),
\label{eq:pfa}
\eeq
and the Probability of Detection, $\pdet$, is bounded as
\beq
P_{D,lb} \le \pdet \le P_{D,ub}
\label{eq:pd}
\eeq
where $P_{D,lb}$ and $P_{D,ub}$ are, respectively, the lower bound and the upper bound for the probability of detection which will take the following formulas
\beq
P_{D,lb}= \Qfun_{F(M_1 \Nb,M_2 \Nb)}({\eta_{lb}}\gamma),
\label{eq:pdlb}
\eeq
and
\beq
P_{D,ub}= \Qfun_{F(M_1 \Nb,M_2 \Nb)}({\eta_{ub}}\gamma),
\label{eq:pdub}
\eeq
where $\eta_{lb}\triangleq\frac{\sigma_0^2+\sigma_x^2 \rho_{N-M_2+1}^2}{\sigma_0^2+\sigma_x^2 \rho_{M_1}^2}$, $\eta_{ub}\triangleq\frac{\sigma_0^2+\sigma_x^2 \rho_{N}^2}{\sigma_0^2+\sigma_x^2 \rho_{1}^2}$, and $\Qfun_{F(d_1 ,d_2)}(x)$ is the {\it tail probability} of the $F$-distribution with parameters (degrees of freedom) $d_1$ and $d_2$ at point $x$.
\end{thm}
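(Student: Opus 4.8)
The plan is to reduce $\cT$ to a ratio of two independent chi-squared variables and read off the stated probabilities. First I would record the consequences of the design (\ref{eq:measS})--(\ref{eq:measO}): since $\mU_{s}^\tr\mU_{s}=\mI_{M_1}$, $\mU_{o}^\tr\mU_{o}=\mI_{M_2}$ and $\mT_s,\mT_o$ are orthogonal, we have $M\mPhi_s\mPhi_s^\tr=\mI_{M_1}$ and $M\mPhi_o\mPhi_o^\tr=\mI_{M_2}$, so under $\cH_0$ the compressed noise obeys $\vz_s[n]\sim\cN(\vzero_{M_1,1},\tfrac{\sigma_0^2}{M}\mI_{M_1})$ and $\vz_o[n]\sim\cN(\vzero_{M_2,1},\tfrac{\sigma_0^2}{M}\mI_{M_2})$, independent across the two blocks (the two sampler sets carry independent input noise) and across $n$. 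Hence $\tfrac{M}{\sigma_0^2}\sum_n\vz_s[n]^\tr\vz_s[n]\sim\chi^2_{M_1\Nb}$ and $\tfrac{M}{\sigma_0^2}\sum_n\vz_o[n]^\tr\vz_o[n]\sim\chi^2_{M_2\Nb}$ are independent; the common factor $\sigma_0^2/M$ cancels in $\cT$, so $\cT$ is a ratio of independent chi-squared variables, whose tail is exactly $\Qfun_{F(M_1\Nb,M_2\Nb)}$, giving (\ref{eq:pfa}).

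Next I would compute the second-order statistics under $\cH_1$. Writing $\vz_s[n]=\mPhi_s\mH\vx[n]+\vz_s^{\mathrm{noise}}[n]$ and using $\mH\mH^\tr=\mU\,\diag(\rho_1^2,\ldots,\rho_N^2)\,\mU^\tr$ together with $\mU_{s}^\tr\mU=[\mI_{M_1}\ \vzero]$, the signal part has covariance $\sigma_x^2\mPhi_s\mH\mH^\tr\mPhi_s^\tr=\tfrac{\sigma_x^2}{M}\mT_s\diag(\rho_1^2,\ldots,\rho_{M_1}^2)\mT_s^\tr$, so $\vz_s[n]\sim\cN(\vzero,\tfrac{1}{M}\mT_s\diag(\sigma_0^2+\sigma_x^2\rho_1^2,\ldots,\sigma_0^2+\sigma_x^2\rho_{M_1}^2)\mT_s^\tr)$, and analogously $\vz_o[n]$ involves the trailing eigenvalues $\rho_{N-M_2+1}^2,\ldots,\rho_N^2$. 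A key point I must establish is that the two blocks stay independent under $\cH_1$ even though both are driven by the same $\vx[n]$: the cross-covariance $\sigma_x^2\mPhi_s\mH\mH^\tr\mPhi_o^\tr$ picks off-diagonal entries of $\diag(\rho_i^2)$ indexed by the disjoint sets $\{1,\ldots,M_1\}$ and $\{N-M_2+1,\ldots,N\}$ (disjoint because $M_1+M_2=M\le N$), hence vanishes, and joint Gaussianity upgrades this to independence. Rotating by $\mT_s^\tr$ and $\mT_o^\tr$, which leaves the energies invariant, then exhibits $\sum_n\vz_s[n]^\tr\vz_s[n]$ and $\sum_n\vz_o[n]^\tr\vz_o[n]$ as \emph{independent} weighted sums of $\chi^2_1$ terms.

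The final step is a sandwiching/coupling argument. For the numerator every weight lies in $[\sigma_0^2+\sigma_x^2\rho_{M_1}^2,\ \sigma_0^2+\sigma_x^2\rho_1^2]/M$ and for the denominator in $[\sigma_0^2+\sigma_x^2\rho_N^2,\ \sigma_0^2+\sigma_x^2\rho_{N-M_2+1}^2]/M$. Replacing all numerator weights by a single extreme and all denominator weights by a single extreme, realized on one probability space, yields pointwise $\tfrac{\sigma_0^2+\sigma_x^2\rho_{M_1}^2}{M}G_s\le\sum_n\vz_s[n]^\tr\vz_s[n]\le\tfrac{\sigma_0^2+\sigma_x^2\rho_1^2}{M}G_s$ with $G_s\sim\chi^2_{M_1\Nb}$, and $\tfrac{\sigma_0^2+\sigma_x^2\rho_N^2}{M}G_o\le\sum_n\vz_o[n]^\tr\vz_o[n]\le\tfrac{\sigma_0^2+\sigma_x^2\rho_{N-M_2+1}^2}{M}G_o$ with $G_o\sim\chi^2_{M_2\Nb}$ independent of $G_s$. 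Combining numerator-lower over denominator-upper and numerator-upper over denominator-lower gives $\eta_{lb}^{-1}\,G_s/G_o\le\cT\le\eta_{ub}^{-1}\,G_s/G_o$ pointwise. Since $G_s/G_o$ has exactly the $\cH_0$ law of $\cT$, whose tail is $\Qfun_{F(M_1\Nb,M_2\Nb)}$, monotonicity of the tail yields $\pdet=\Prob(\cT>\gamma)\ge\Prob(G_s/G_o>\eta_{lb}\gamma)=\Qfun_{F(M_1\Nb,M_2\Nb)}(\eta_{lb}\gamma)$ and similarly $\pdet\le\Qfun_{F(M_1\Nb,M_2\Nb)}(\eta_{ub}\gamma)$, which is precisely (\ref{eq:pd})--(\ref{eq:pdub}).

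I expect the main obstacle to be the $\cH_1$ analysis rather than the PFA computation. Under $\cH_1$ the two energies are generalized (weighted) chi-squared variables with no closed-form tail, so an exact $\pdet$ is unavailable and bounds are forced. The two delicate points are (i) proving the \emph{exact} independence of numerator and denominator despite both depending on the same $\vx[n]$, which rests entirely on the disjointness of the leading and trailing singular subspaces selected by $\mPhi_s$ and $\mPhi_o$, and (ii) casting the eigenvalue inequalities as an almost-sure coupling on a common $G_s,G_o$, so that the \emph{ratio} $\cT$ can be squeezed between two exact scalings of a single $F$ variate.
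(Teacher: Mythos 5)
Your proposal is correct and follows essentially the same route as the paper's proof: exact $F$ statistics under $\cH_0$, and under $\cH_1$ a Rayleigh--Ritz-type sandwich of each energy between extreme-eigenvalue multiples of the exactly chi-squared whitened forms $\sum_n \vz_s[n]^\tr\mA_s^{-1}\vz_s[n]$ and $\sum_n \vz_o[n]^\tr\mA_o^{-1}\vz_o[n]$, which is precisely your coupling on a common $G_s$, $G_o$. The only substantive difference is that you explicitly verify the cross-independence of the two blocks under $\cH_1$ via the vanishing cross-covariance $\mPhi_s\mH\mH^\tr\mPhi_o^\tr$, a point the paper leaves implicit.
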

\begin{proof}
To prove the theorem, we first notice that $\vz_o[n]$ has distribution
\beq
\vz_o[n] \sim \left\{ \begin{array}{ll}   \cN(\vzero,\frac{\sigma_0^2}{M}\mI_{M_2}) & \mathrm{under~} \cH_0,  \\ \\
 \cN(\vzero,\frac{1}{M}(\sigma_0^2\mI_{M_2}+\sigma_x^2 \mT_o \mLambda_o \mT_o^\tr) )& \mathrm{under~} \cH_1, \end{array}\right.
\label{eq:vzo1}
\eeq
where $\mLambda_o$ is an $M_2 \times M_2$ diagonal matrix whose $(i,i)$-th diagonal element is $\rho_{N-M_2+i}^2$.

On the other hand, the distribution of $\vz_s$ is
\beq
\vz_s[n] \sim \left\{ \begin{array}{ll}   \cN(\vzero,\frac{\sigma_0^2}{M}\mI_{M_1}) & \mathrm{under~} \cH_0,  \\ \\
 \cN(\vzero,\frac{1}{M}(\sigma_0^2\mI_{M_1}+\sigma_x^2 \mT_s \mLambda_s \mT_s^\tr) )& \mathrm{under~} \cH_1, \end{array}\right.
\label{eq:vzs1}
\eeq
where $\mLambda_s$ is an $M_1 \times M_1$ diagonal matrix whose $(i,i)$-th diagonal element is $\rho_{i}^2$.
%Then the test for existence of signal, can be written as
%\beqa
%\left\{ \begin{array}{l} \cH_0: \var(\vz_s[n])=\var(\vz_o[n]), \\ 
%\cH_1: \var(\vz_s[n]) > \var(\vz_o[n]). \end{array} \right.
%\label{eq:test2}
%\eeqa
%or equivalently
%\beqa
%\left\{ \begin{array}{l} \cH_0: \frac{\var(\vz_s[n])}{\var(\vz_o[n])}=1, \\ 
%\cH_1: \frac{\var(\vz_s[n])}{\var(\vz_o[n])}>1. \end{array} \right.
%\label{eq:test3}
%\eeqa
%for which the F-test can be written as

From  (\ref{eq:vzs1}) we have
\beqa
\frac{\sum\limits_{n=1}^{\Nb} \vz_s[n]^\tr \vz_s[n]}{\sigma_0^2/M} \sim \cChi_{M_1 \Nb}^2, \mathrm{under~}  \cH_0,
\eeqa
where $\cChi_{d}^2$ denotes chi-squared distribution with $d$ degrees of freedom \cite{kay_volII}. On the other hand, under $\cH_1$ we have
\beqa
{\sum\limits_{n=1}^{\Nb} \vz_s[n]^\tr \mA_s^{-1} \vz_s[n]} \sim \cChi_{M_1 \Nb}^2, \mathrm{under~}  \cH_1.
\eeqa
where $\mA_s\triangleq\frac{1}{M}(\sigma_0^2\mI_{M_1}+\sigma_x^2 \mT_s \mLambda_s \mT_s^\tr)$.
From Rayleigh-Ritz theorem \cite[Chapter 8]{zhang2011matrix} we know that
\beq
\lambda_{\min}(\mA_s^{-1} ) \vz_s^\tr \vz_s \le \vz_s^\tr \mA_s^{-1} \vz_s \le \lambda_{\max}(\mA_s^{-1} ) \vz_s^\tr \vz_s
\eeq
where $\lambda_{\min}(\mA_s^{-1} )=\frac{1}{\lambda_{\max}(\mA_s)}=\frac{M}{\sigma_0^2+\sigma_x^2 \rho_{1}^2}$ and $\lambda_{\max}(\mA_s^{-1} )=\frac{1}{\lambda_{\min}(\mA_s)}=\frac{M}{\sigma_0^2+\sigma_x^2 \rho_{M_1}^2}$.

Similarly, from (\ref{eq:vzo1}) we have
\beqa
\frac{\sum\limits_{n=1}^{\Nb} \vz_o[n]^\tr \vz_o[n]}{\sigma_0^2/M} \sim \cChi_{M_1 \Nb}^2, \mathrm{under~}  \cH_0, 
\eeqa
and
\beqa
{\sum\limits_{n=1}^{\Nb} \vz_o[n]^\tr \mA_o^{-1} \vz_o[n]} \sim \cChi_{M_2 \Nb}^2, \mathrm{under~}  \cH_1.
\eeqa
where $\mA_o\triangleq\frac{1}{M}(\sigma_0^2\mI_{M_2}+\sigma_x^2 \mT_o \mLambda_o \mT_o^\tr)$. We also have
\beq
\lambda_{\min}(\mA_o^{-1} ) \vz_o^\tr \vz_o \le \vz_o^\tr \mA_o^{-1} \vz_o \le \lambda_{\max}(\mA_o^{-1} ) \vz_s^\tr \vz_s
\eeq
where $\lambda_{\min}(\mA_o^{-1} )=\frac{1}{\lambda_{\max}(\mA_o)}=\frac{M}{\sigma_0^2+\sigma_x^2 \rho_{N-M_2+1}^2}$ and $\lambda_{\max}(\mA_o^{-1} )=\frac{1}{\lambda_{\min}(\mA_o)}=\frac{M}{\sigma_0^2+\sigma_x^2 \rho_{N}^2}$.

From above computations, under $\cH_0$ hypothesis we will have
\beq
\frac{\sum\limits_{n=1}^{\Nb} \vz_s[n]^\tr \vz_s[n]}{\sum\limits_{n=1}^{\Nb} \vz_o[n]^\tr \vz_o[n]} \sim F_{M_1 \Nb,M_2 \Nb},\mathrm{under~}  \cH_0.
\label{eq_h0}
\eeq
Under $\cH_1$ hypothesis we have
\beq
\frac{\frac{\sum\limits_{n=1}^{\Nb} \vz_s[n]^\tr \vz_s[n]}{(\sigma_0^2+\sigma_x^2 \rho_{1}^2)/M}}{\frac{\sum\limits_{n=1}^{\Nb} \vz_o[n]^\tr \vz_o[n]}{(\sigma_0^2+\sigma_x^2 \rho_{N}^2)/M}} \le 
\frac{\sum\limits_{n=1}^{\Nb} \vz_s[n]^\tr \mA_s^{-1} \vz_s[n]}{{\sum\limits_{n=1}^{\Nb} \vz_o[n]^\tr \mA_o^{-1} \vz_o[n]}} \sim F_{M_1 \Nb,M_2 \Nb}
\label{ineq1_h1}
\eeq
and similarly
\beq
\frac{\frac{\sum\limits_{n=1}^{\Nb} \vz_s[n]^\tr \vz_s[n]}{(\sigma_0^2+\sigma_x^2 \rho_{M_1}^2)/M}}{\frac{\sum\limits_{n=1}^{\Nb} \vz_o[n]^\tr \vz_o[n]}{(\sigma_0^2+\sigma_x^2 \rho_{N-M_2+1}^2)/M}} \ge 
\frac{\sum\limits_{n=1}^{\Nb} \vz_s[n]^\tr \mA_s^{-1} \vz_s[n]}{{\sum\limits_{n=1}^{\Nb} \vz_o[n]^\tr \mA_o^{-1} \vz_o[n]}} \sim F_{M_1 \Nb,M_2 \Nb}
\label{ineq2_h1}
\eeq
From (\ref{eq_h0}), (\ref{ineq1_h1}), and (\ref{ineq2_h1}) the proof is concluded.
\end{proof}
{\bf Remark 1}: The $\Qfun$-function of $F$-distribution is expressed as \cite{johnson95}
\beq
\Qfun_{F(d_1,d_2)}(x)=1-I_{\frac{d_1 x}{d_1 x + d_2}}(\frac{d_1}{2},\frac{d_2}{2}), 
\label{eq:Qf}
\eeq
where $I_{x}(a,b)$ is the {\it regularized incomplete Beta function} defined as
\beq
I_{x}(a,b) = \frac{B(x;a,b)}{B(a,b)}, 
\eeq
where $B(x;a,b) \triangleq \int_0^x t^{a-1} (1-t)^{b-1} dt$ is the {\it incomplete Beta function}  and $B(a,b) \triangleq B(1;a,b)$ is the {\it Beta function}. \qed

{\bf Remark 2}: From Theorem \ref{theorem1} it is clear that for a given $M_1$ and $M_2$ the proposed detector is {\it unbiased}\cite[Chapter 4]{scharf-book} if $\rho_{N-M_2+1}<\rho_{M_1}$. Furthermore, if $\rho_1^2>\rho_N^2$, then there is always a design as suggested in this section with certain $M_1$ and $M_2$ which delivers an unbiased detector. \qed

{\revcol
Theorem \ref{thm0} showed that the proposed design of (\ref{eq:measS}) and (\ref{eq:measO}) is the one that provides the largest difference between the expected energies of signals $\vz_s$ and $\vz_o$ under orthogonality condition. Then in Theorem \ref{theorem1} a test for detecting random subspace signals based on this design was proposed and the probabilities of false alarm and detection for the proposed detector were derived. An important measure in statistical hypothesis testing is to show that the proposed test is the most powerful test in the sense that it has the highest probability of detection, $\pdet$, for a fixed probability of false alarm, $\pfa$. In  the following a corollary is provided to show that among all orthogonal designs, the one proposed in (\ref{eq:measS}) and (\ref{eq:measO}) delivers the highest probability of detection for the detector in (\ref{eq:testStat}).

\begin{cor}
\label{corol0}

Among all orthogonal designs satisfying $\mPhi \mPhi^T=\frac{1}{M} \mI_{M}$ with fixed number of measurement devices $(M_1,M_2)$, the design of (\ref{eq:measS}) and (\ref{eq:measO})  provides the highest $P_{D,lb}$ and $P_{D,ub}$ for a given $\pfa$.
\end{cor}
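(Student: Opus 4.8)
The plan is to reduce the ranking of the detection-probability bounds to a ranking of the scalar factors $\eta_{lb}$ and $\eta_{ub}$, and then to settle that ranking with the Poincar\'e separation (eigenvalue interlacing) theorem. The first observation is that the false-alarm analysis is design-independent: under $\cH_0$ the signal is absent, so $\vz_s$ and $\vz_o$ carry only noise, whose covariance is $\frac{\sigma_0^2}{M}\mI$ for \emph{every} orthogonal design with $\mPhi\mPhi^\tr=\frac1M\mI_M$. Hence (\ref{eq:pfa}) holds verbatim for all admissible designs, and fixing $\pfa$ fixes one common threshold $\gamma=\Qfun_{F(M_1\Nb,M_2\Nb)}^{-1}(\pfa)>0$.

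Next I would replay the proof of Theorem \ref{theorem1} for an arbitrary admissible design. Setting $\mW_s\triangleq\sqrt M\,\mPhi_s^\tr$ and $\mW_o\triangleq\sqrt M\,\mPhi_o^\tr$ (each with orthonormal columns), the $\cH_1$ covariances are $\mA_s=\frac1M(\sigma_0^2\mI_{M_1}+\sigma_x^2\mW_s^\tr\mH\mH^\tr\mW_s)$ and $\mA_o=\frac1M(\sigma_0^2\mI_{M_2}+\sigma_x^2\mW_o^\tr\mH\mH^\tr\mW_o)$. Writing $\mu_1^s\ge\cdots\ge\mu_{M_1}^s$ and $\mu_1^o\ge\cdots\ge\mu_{M_2}^o$ for the eigenvalues of the compressed matrices $\mW_s^\tr\mH\mH^\tr\mW_s$ and $\mW_o^\tr\mH\mH^\tr\mW_o$, the identical Rayleigh--Ritz argument delivers (\ref{eq:pdlb})--(\ref{eq:pdub}) but now with $\eta_{ub}=\frac{\sigma_0^2+\sigma_x^2\mu_{M_2}^o}{\sigma_0^2+\sigma_x^2\mu_1^s}$ and $\eta_{lb}=\frac{\sigma_0^2+\sigma_x^2\mu_1^o}{\sigma_0^2+\sigma_x^2\mu_{M_1}^s}$. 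Since $\gamma>0$ is common to all designs and $\Qfun_{F(M_1\Nb,M_2\Nb)}(\cdot)$ is strictly decreasing, maximizing $P_{D,lb}$ and $P_{D,ub}$ is precisely the same as minimizing $\eta_{lb}$ and $\eta_{ub}$.

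Each ratio is increasing in its numerator eigenvalue and decreasing in its denominator eigenvalue, so the minimization decouples into four extremal eigenvalue problems: make $\mu_1^s,\mu_{M_1}^s$ as large and $\mu_1^o,\mu_{M_2}^o$ as small as possible. Here I would invoke the Poincar\'e separation theorem: for any $N\times k$ matrix with orthonormal columns, the ordered eigenvalues of the compression $\mW^\tr\mH\mH^\tr\mW$ interlace those of $\mH\mH^\tr$, yielding $\mu_1^s\le\rho_1^2$, $\mu_{M_1}^s\le\rho_{M_1}^2$, $\mu_1^o\ge\rho_{N-M_2+1}^2$ and $\mu_{M_2}^o\ge\rho_N^2$ for every admissible design, whence $\eta_{ub}\ge\frac{\sigma_0^2+\sigma_x^2\rho_N^2}{\sigma_0^2+\sigma_x^2\rho_1^2}$ and $\eta_{lb}\ge\frac{\sigma_0^2+\sigma_x^2\rho_{N-M_2+1}^2}{\sigma_0^2+\sigma_x^2\rho_{M_1}^2}$. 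Finally I would verify that the design (\ref{eq:measS})--(\ref{eq:measO}) attains all four bounds at once: its $\mW_s$ spans the top $M_1$ eigenvectors of $\mH\mH^\tr$ (so $\mu_i^s=\rho_i^2$) while its $\mW_o$ spans the bottom $M_2$ eigenvectors (so $\mu_i^o=\rho_{N-M_2+i}^2$), and these two subspaces are orthogonal, so the design is admissible. It therefore simultaneously minimizes $\eta_{lb}$ and $\eta_{ub}$ and achieves the highest $P_{D,lb}$ and $P_{D,ub}$, as claimed.

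The step I expect to require the most care is the bookkeeping in the generalized Rayleigh--Ritz bounds: one must land on the correct extremal eigenvalues ($\mu_1^s,\mu_{M_2}^o$ for the upper factor and $\mu_{M_1}^s,\mu_1^o$ for the lower factor) and confirm that the monotonicity of each $\eta$ in these eigenvalues points in exactly the direction that the interlacing inequalities optimize. Once this alignment is checked, the conclusion is an immediate consequence of eigenvalue interlacing, with no further estimates needed.
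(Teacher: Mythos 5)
Your proposal is correct and follows essentially the same route as the paper's own proof: both observe that the false-alarm probability (hence the threshold) is common to all orthogonal designs, reduce maximizing $P_{D,lb}$ and $P_{D,ub}$ to minimizing the generalized factors $\eta_{lb}$ and $\eta_{ub}$ built from the extremal eigenvalues of the compressed matrices, and settle the resulting four extremal eigenvalue problems with the Poincar\'e separation theorem. Your explicit check that the design (\ref{eq:measS})--(\ref{eq:measO}) attains all four interlacing bounds simultaneously is a small completeness addition the paper leaves implicit, but it does not change the argument.
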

\begin{proof}
Consider an arbitrary orthogonal measurement design as $\mPhi=[\mPhi_s^T,\mPhi_o^T]^T$, where $\mPhi_s \in \bR^{M_1 \times N}$ and $\mPhi_o \in \bR^{M_2 \times N}$. The goal is to show that if we choose them as in (\ref{eq:measS}) and (\ref{eq:measO}) then $P_{D,lb}$ and $P_{D,ub}$ are maximized for a given $\pfa$. To this end, first by taking the same steps as in the proof of Theorem \ref{theorem1}, it is easy to obtain
the Probability of False Alarm as $\pfa=\Qfun_{F(M_1 \Nb,M_2 \Nb)}(\gamma)$. Then for fixed $\pfa$, $M_1$, $M_2$, and $\Nb$ the lower and upper bounds on the Probability of Detection, $P_{D,lb} \le \pdet \le P_{D,ub}$, can be respectively expressed as
$P_{D,lb}= \Qfun_{F(M_1 \Nb,M_2 \Nb)}({\tilde{\eta}_{lb}}\Qfun_{F(M_1 \Nb,M_2 \Nb)}^{-1}(\pfa))$ and
$P_{D,ub}= \Qfun_{F(M_1 \Nb,M_2 \Nb)}({\tilde{\eta}_{ub}}\Qfun_{F(M_1 \Nb,M_2 \Nb)}^{-1}(\pfa))$, where 
\beq
\tilde{\eta}_{lb}\triangleq\frac{\sigma_0^2+\sigma_x^2H \lambda_{1}(\mH^T \mPhi_o^T \mPhi_o \mH)}{\sigma_0^2+\sigma_x^2 \lambda_{M_1}(\mH^T \mPhi_s^T \mPhi_s \mH)},
\label{eq:generalLB}
\eeq
and 
\beq
\tilde{\eta}_{ub}\triangleq\frac{\sigma_0^2+\sigma_x^2\lambda_{M_2}(\mH^T \mPhi_o^T \mPhi_o \mH)}{\sigma_0^2+\sigma_x^2 \lambda_{1}(\mH^T \mPhi_s^T \mPhi_s \mH)}.
\label{eq:generalUB}
\eeq
Therefore the optimum design for $\mPhi_s$ and $\mPhi_o$ is the one that minimizes $\tilde{\eta}_{lb}$ and $\tilde{\eta}_{ub}$ in (\ref{eq:generalLB}) and (\ref{eq:generalUB}), which in turn is the one that minimizes $\lambda_{1}(\mH^T \mPhi_o^T \mPhi_o \mH)$ and $\lambda_{M_2}(\mH^T \mPhi_o^T \mPhi_o \mH)$ and maximizes  $\lambda_{1}(\mH^T \mPhi_s^T \mPhi_s \mH)$ and  $\lambda_{M_1}(\mH^T \mPhi_s^T \mPhi_s \mH)$. The proof is then easily concluded by employing Poincar\'{e} Separation Theorem \cite[Theorem P 10.4.2]{raorao}. %it can be easily proven that this is obtained when we choose $\mPhi_1=\mPhi_s$ and $\mPhi_2=\mPhi_o$.
\end{proof}
}

{\revcol
{\bf Remark 3}: The proposed detector can be thought of as an F-test detector. F-test \cite{degroot02} is a statistical test for comparing the variances of two normal populations. The proposed detector can be seen as an F-test detector where the two populations, i.e., outputs of the two sets of measurement devices, have been designed so as to impose maximum difference to their variances, which optimizes the performance of the F-detector since it works based on the difference between variances. This is in fact why the optimizations in Theorem \ref{thm0} and Corollary \ref{corol0}, which are trying to optimize different cost functions, result in identical solutions. \qed
}
%We remark that F-test is itself a likelihood ratio test, see, e.g., \cite{degroot02} for more discussion on F-test.

%{\bf Remark 4}: While Theorem \ref{theorem1}  proposes a test statistic for signal detection and provides a proof for the performance of the proposed detector, it does not provide us with the optimum number of samplers $M_1$ and $M_2$. 
%From the performance viewpoint in terms of probability of detection, the higher is $M_2$, the better is the performance. This is not however the case for  $M_1$ as all the highest $M_1$ singular values are not equal. 
%We will study the impact of $M_1$ and $M_2$ on the performance of the proposed detector in Section \ref{sec:simulation} via simulation examples.
%From a practical point of view an important factor which may limit utilizing high values of $M_1$ and $M_2$ is the hardware cost and complexity as it is proportional to the number of samplers which is in turn determined by $M_1$ and $M_2$. \qed

{\revcol
%{\bf Remark 5: Performance analysis of the detector as a function of $M$:} 
While Theorem \ref{theorem1}  proposes a test statistic for signal detection and provides a proof for the performance of the proposed detector, it does not provide us with the optimum number of equipped samplers. Because of the complicated form of the tail probability of F-distribution and regularized incomplete Beta function, finding a closed form solution for $\pdet$ as a function of $\pfa$, $M_1$, $M_2$, and $\Nb$ is overly cumbersome. However, for large degrees of freedom, say larger than 100, it is possible to provide an analysis based on the following Normal approximation of the F-distribution \cite[Section 12.4.4]{krish06}
\beq
\Qfun_{F(d_1,d_2)}(x)\approx\Qfun_z(x-\tilde{\mu}/\tilde{\sigma})
\eeq
where $\Qfun_z$ is the tail function of the standard Gaussian distribution, 
\beq
\tilde{\mu}=\frac{d_2}{d_2-2},
\label{eq:mutilde}
\eeq
and
\beq
\tilde{\sigma}=\tilde{\mu} \sqrt{\frac{2(d_1+d_2-2)}{d_1(d_2-4)}}.
\label{eq:sigtilde}
\eeq
Then, for our problem, we can write
\beq
P_{D,lb} = \Qfun_z(\eta_{lb} \tilde{\sigma} \Qfun_z^{-1}(\pfa)+\eta_{lb} \tilde{\mu}),
\eeq
and similarly for $P_{D,ub}$, where $d_1\triangleq \alpha M \Nb$, $d_2 \triangleq (1-\alpha) M \Nb$, and $\alpha\triangleq M_1/M$. From (\ref{eq:mutilde}) and (\ref{eq:sigtilde}) it is clear that for big values of $M \Nb$, both $\tilde{\mu}$ and $\tilde{\sigma}$ are decreasing functions of $M \Nb$, and therefore $P_{D,lb}$ is an increasing function of both $M$ and $\Nb$.
This can be seen also from Figure \ref{fig:MNb} which illustrates an example of how the performance of the compressive detector changes with the total number of sampling devices $M$ and the number of temporal measurements $\Nb$. In this example, we set $\rho_1=\rho_{M_1}=1$ and $\rho_N=\rho_{N-M_2+1}=0$ to have $\pdet=P_{D,lb}=P_{D,ub}$. Besides, we choose $M_1=M_2=M/2$, $\pfa=0.05$, and $\sigma_x^2=\sigma_o^2=1$. The probabilities of false alarm and detection are then calculated from (\ref{eq:pfa}) and (\ref{eq:pdlb}).
%As it can be seen from Figure \ref{fig:MNb}, the power of the test, $\pdet$, increases as $M$ and $N_b$ grow.
\begin{figure}[t!]
\begin{center}
\includegraphics[width=0.7\textwidth,height=0.5\textwidth]{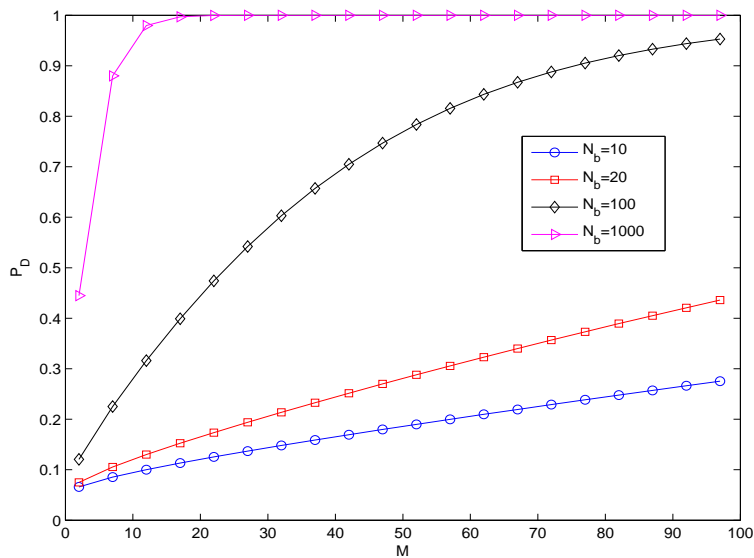}
\caption{{\revcol Probability of Detection as a function of $M$ and $\Nb$. Here we have $M_1=M_2=M/2$.}}
\label{fig:MNb}
\end{center}
\end{figure}
Notice that from a practical point of view an important factor which limits utilizing high values of $M_1$ and $M_2$ is the hardware cost and complexity as it is proportional to the number of samplers which is in turn determined by $M_1$ and $M_2$.

Furthermore, for $\Nb \gg 1$, one can apply a further approximation on (\ref{eq:mutilde}) and (\ref{eq:sigtilde}) to obtain
\beq
\tilde{\mu} \approx 1,
\label{eq:mutilde2}
\eeq
and
\beq
\tilde{\sigma} \approx \sqrt{\frac{2}{\alpha (1-\alpha) \Nb M}}.
\label{eq:sigtilde2}
\eeq
Equation (\ref{eq:sigtilde2}) then shows that the performance is optimized for $\alpha=0.5$, or in other words when we choose $M_1=M_2$. This means that when designing a Maximally-Uncorrelated Compressive Detector, the optimum choice is to divide the hardware budget equally between max-energy samplers $\mPhi_s$ and min-energy samplers $\mPhi_o$.
%In opposite to the Compressive Sensing for signal estimation, parameters $N$ and $K$ do not have direct influence on the performance of the proposed compressive detector.
\begin{figure}[h]
\begin{center}
\includegraphics[width=0.7\textwidth,height=0.5\textwidth]{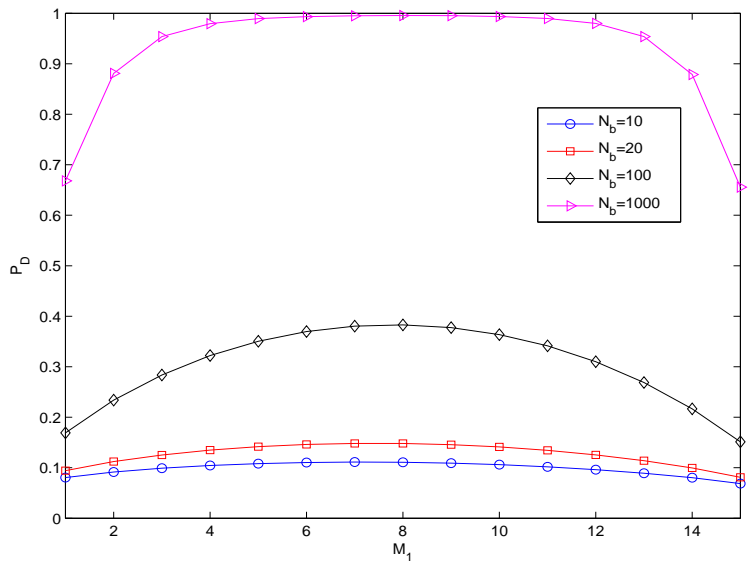}
\caption{{\revcol Probability of Detection as a function of $M_1$ and $\Nb$ for a fixed total number of measurement devices $M=16$.}}
\label{fig:M1}
\end{center}
\end{figure}
This can be seen also from Figure \ref{fig:M1} which shows that for a fixed number of measurement devices $M=16$, the maximum indeed occurs practically in the middle, which means that the optimum choice is to allocate half of the measurement devices to $\mPhi_s$ and the other half to $\mPhi_o$.}

\subsection{Fully-Correlated Compressive Detector}
\label{sec:maxsnr}

In this part, we study the effect of increasing the Signal-to-Noise Ratio (SNR) at the expense of uncorrelatedness of measurement devices. For a certain measurement device, the effective noise energy can be decreased (and therefore the SNR can be increased) by repeating the measurements and averaging over all of them. Collecting the multiple measurements one after the other over time for enhancing the effective SNR is not however a suitable solution here, firstly because the signal itself is random too, and secondly because  it increases the {\it latency} which is an issue in many applications such as radar or wireless communications due to the relatively quick changes of communication channel characteristics.

However in our proposed measurement framework, improving the effective SNR can be done by employing $M_1$ identical measurement vectors instead of $M_1 \times N$  measurement matrix $\mPhi_s$ in (\ref{eq:measS}) and $M_2$ identical measurement vectors instead of  $M_2 \times N$ measurement matrix $\mPhi_o$ in (\ref{eq:measO}) and then averaging over all measurements. This will not increase the latency because the identical measurements are taken in parallel. In other words, matrices $\mPhi_s$ and $\mPhi_o$ in (\ref{eq:measS}) and (\ref{eq:measO}) should be respectively redefined as
\beq
\mPhi_s \triangleq  \frac{1}{\sqrt{M}} \vone_{M_1,1} \otimes \vu_1^\tr,
\label{eq:measPrecS}
\eeq
and 
\beq
\mPhi_o \triangleq \frac{1}{\sqrt{M}} \vone_{M_2,1} \otimes \vu_{N}^\tr,
\label{eq:measPresO}
\eeq
where $\vu_1$ and $\vu_N$ are the left singular vectors of $\mH$ corresponding to the $\rho_1$ and $\rho_N$, respectively. The following theorem is to show the significance of the measurement design in (\ref{eq:measPrecS}) and (\ref{eq:measPresO}).

\begin{thm}
The measurement sub-matrix $\mPhi_s$ in (\ref{eq:measPrecS}) is the solution to the following optimization problem
\beq
\arg \max_{\mPhi} \Exp_{\vx,\vw}( \| \vz_s\|_2^2).
%\arg \max_{\mPhi} \sum_{i=1}^{M_1} \mathrm{var}(z_{s,i}),~~\mathrm{s.t.}~~\mPhi^\tr \mPhi=\mI_{M_1},
\label{eq:th00eq1}
\eeq
%where $z_{s,i}\triangleq [\vz_s]_{i}$. 
In other words, the rows of $\mPhi_s$ in (\ref{eq:measPrecS}) represent the set of $M_1$ measurement devices that maximize the total energy (or the total variance) in their outputs.

Correspondingly, the measurement sub-matrix $\mPhi_o$ in (\ref{eq:measPresO}) is the solution to the following optimization problem
\beq
%\arg \min_{\mPhi} \sum_{i=1}^{M_2} \mathrm{var}(z_{o,i}),~~\mathrm{s.t.}~~\mPhi^\tr \mPhi=\mI_{M_2},
\arg \min_{\mPhi}  \Exp_{\vx,\vw}( \| \vz_s\|_2^2).
\label{eq:th00eq2}
\eeq
In other words, the rows of $\mPhi_o$ in (\ref{eq:measPresO}) represent the set of $M_2$ measurement devices that minimize the total energy (or the total variance) in their outputs.
\label{thm00}
\end{thm}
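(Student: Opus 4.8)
The plan is to reduce the problem to a per-row Rayleigh quotient optimization by first expanding the expected output energy exactly as in the proof of Theorem~\ref{thm0}. Writing $\mPhi_s=[\vphi_1,\ldots,\vphi_{M_1}]^\tr$ and using that the measurement devices carry statistically independent noise, I would obtain
\beq
\Exp_{\vx,\vw}(\|\vz_s\|_2^2)=\sigma_x^2\,\trace(\mH\mH^\tr\mPhi_s^\tr\mPhi_s)+\sigma_0^2\,\trace(\mPhi_s^\tr\mPhi_s),
\eeq
where the first term is the signal contribution $\Exp_{\vx}(\vx^\tr\mH^\tr\mPhi_s^\tr\mPhi_s\mH\vx)$ and the second collects the independent per-device noise energies $\sum_m\sigma_0^2\|\vphi_m\|_2^2$. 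The structural observation that drives everything is that, once each measurement vector is held to the fixed per-device energy $\|\vphi_m\|_2^2=1/M$ dictated by the $1/\sqrt{M}$ normalization inherited from (\ref{eq:measS})--(\ref{eq:measO}), the noise term is frozen at $M_1\sigma_0^2/M$ and drops out of the optimization.

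Next I would exploit the fact that dropping the orthogonality constraint of Theorem~\ref{thm0} decouples the rows: since $\mPhi_s^\tr\mPhi_s=\sum_{m=1}^{M_1}\vphi_m\vphi_m^\tr$, the signal term separates as
\beq
\trace(\mH\mH^\tr\mPhi_s^\tr\mPhi_s)=\sum_{m=1}^{M_1}\vphi_m^\tr\mH\mH^\tr\vphi_m,
\eeq
a sum of independent Rayleigh quotients each subject only to $\|\vphi_m\|_2^2=1/M$. By the Rayleigh--Ritz theorem \cite[Chapter 8]{zhang2011matrix}, each summand is maximized by aligning $\vphi_m$ with the top eigenvector of $\mH\mH^\tr$, giving $\vphi_m=\frac{1}{\sqrt{M}}\vu_1$ and value $\rho_1^2/M$; because the rows no longer interact, every row may simultaneously take this value, which is exactly $\mPhi_s=\frac{1}{\sqrt{M}}\vone_{M_1,1}\otimes\vu_1^\tr$ of (\ref{eq:measPrecS}). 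The minimization problem (\ref{eq:th00eq2}) is handled identically: each Rayleigh quotient is minimized by $\vu_N$, yielding $\mPhi_o=\frac{1}{\sqrt{M}}\vone_{M_2,1}\otimes\vu_N^\tr$ of (\ref{eq:measPresO}).

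I expect the main obstacle to be conceptual rather than computational, namely making precise the feasible set: as written, the maximum of $\Exp_{\vx,\vw}(\|\vz_s\|_2^2)$ is unbounded (scaling $\mPhi_s$ inflates the energy without limit), so the argument only makes sense under the implicit per-device normalization. Once that is stated, the contrast with Theorem~\ref{thm0} becomes the whole point: there the shared orthogonality constraint forces the $M_1$ rows onto distinct singular directions $\vu_1,\ldots,\vu_{M_1}$, whereas here the absence of coupling lets every row collapse onto the single strongest direction $\vu_1$, trading the uncorrelatedness of the measurements for a coherent energy (and hence SNR) gain. A minor point to flag is the apparent typo in (\ref{eq:th00eq2}), whose objective should read $\Exp_{\vx,\vw}(\|\vz_o\|_2^2)$.
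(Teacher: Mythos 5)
Your proof is correct, but it reaches the conclusion by a genuinely different route than the paper. The paper's proof keeps the objective in trace form and argues in two stages: it first bounds $\trace(\mH\mH^\tr\mPhi_s\mPhi_s^\tr)\le\sum_{i=1}^{M_1}\rho_i^2\lambda_i(\mPhi_s\mPhi_s^\tr)$ via the von Neumann trace theorem, then uses a majorization argument \cite[4.B.7 and 1.A]{majorization} to show that, with only the trace of $\mPhi_s\mPhi_s^\tr$ fixed at $M_1/M$ (rather than all its eigenvalues pinned to $1/M$ as in Theorem~\ref{thm0}), the bound is maximized by concentrating the entire eigenvalue mass in $\lambda_1(\mPhi_s\mPhi_s^\tr)$, i.e.\ by making $\mPhi_s$ rank one; a final application of von Neumann identifies the optimal direction as $\vu_1$. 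You instead decouple the trace into a sum of per-row Rayleigh quotients $\sum_m\vphi_m^\tr\mH\mH^\tr\vphi_m$, each constrained only by $\|\vphi_m\|_2^2=1/M$, and apply Rayleigh--Ritz row by row. Your version is more elementary and makes the structural point more transparent: it shows directly that removing the orthogonality coupling lets every row independently collapse onto $\vu_1$, whereas the paper's majorization step delivers the same rank-one conclusion less explicitly (and in a form that parallels its Theorem~\ref{thm0} proof). Both arguments hinge on the same implicit normalization, and you are right to flag it: without the per-row energy constraint the optimization is unbounded, and the paper itself only states this constraint inside the proof, not in the theorem. Your observation that the objective in (\ref{eq:th00eq2}) should read $\Exp_{\vx,\vw}(\|\vz_o\|_2^2)$ is also correct; it is a typo in the statement.
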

\begin{proof}
We first notice that
\beqa
%\arg \max_{\mPhi} 
\Exp_{\vx,\vw}( \| \vz_s\|_2^2) &=&  \Exp_{\vx} (\vx^\tr \mH^\tr \mPhi_s^\tr \mPhi_s \mH \vx) + \frac{M_1 \sigma_0^2}{M} \nonr \\
&=& \sigma_x^2 \trace(\mH \mH^\tr \mPhi_s^\tr \mPhi_s) + \frac{M_1 \sigma_0^2}{M} 
\label{eq:th3proof1}
\eeqa
Since the second term on the right-hand side of (\ref{eq:th3proof1}) is independent of measurement design, the optimum design for $\mPhi_s$ is the one which maximizes the first term. Noticing that $\mH^\tr \mH$ and $\mPhi_s^\tr \mPhi_s$ are both symmetric, from \cite[P 11.4.5]{raorao} we have
\beqa
\trace(\mH \mH^\tr \mPhi_s \mPhi_s^\tr) \le \sum_{i=1}^{M_1} \rho_i^2 \lambda_i (\mPhi_s \mPhi_s^\tr)
\label{eq:th3proof2}
\eeqa
Unlike Theorem \ref{thm0}, here we have now no constraint on matrix $\mPhi_s$ other than its rows having norm $1/\sqrt{M}$ which means that $\trace(\mPhi_s \mPhi_s^\tr)=\sum_{i=1}^{M_1} \lambda_i(\mPhi_s \mPhi_s^\tr)=M_1/M$. Therefore the design of $\mPhi_s$ can be done in two steps. In the first step we find the eigenvalues $\{ \lambda_i(\mPhi_s \mPhi_s^\tr)\}_{i=1}^{M_1}$ that maximize the right-hand side of inequality (\ref{eq:th3proof1}). From \cite[4.B.7 and 1.A]{majorization} the maximum is attained when $\lambda_1(\mPhi_s \mPhi_s^\tr)=1$ and $\{ \lambda_i(\mPhi_s \mPhi_s^\tr)\}_{i=2}^{M_1}=0$. In the second step we simply employ \cite[P 11.4.5]{raorao} to conclude the proof of (\ref{eq:measPrecS}) being the solution to (\ref{eq:th00eq1}). The proof for  (\ref{eq:measPresO}) is similar.
\end{proof}

After designing the samplers as in (\ref{eq:measPrecS}) and (\ref{eq:measPresO}) and collecting measurement vectors $\vz_s[n]$ and $\vz_o[n]$, the effective energy of noise can then be reduced by averaging over all elements of $\vz_s[n]$ and $\vz_o[n]$ to obtain
\beq
\bar{z}_s[n]=\frac{1}{M_1} \vone_{1,M_1} \vz_s[n],~n=1,\ldots,\Nb,
\label{eq:avgMeasS}
\eeq
and
\beq
\bar{z}_o[n]=\frac{1}{M_2} \vone_{1,M_2} \vz_o[n],~n=1,\ldots,\Nb.
\label{eq:avgMeasO}
\eeq
In other words, instead of choosing $M_1$ rows of $\mPhi_s$ that span the $M_1$-dimensional subspace along which the energy of signal is maximum as in Section \ref{sec:maxrank}, we choose $M_1$ identical rows for $\mPhi_s$ each as the $1$-dimensional subspace along which the energy of signal is maximum, and then decrease the noise power (and therefore increase the effective SNR) by averaging over all $M_1$ measurements. Similarly, for $\mPhi_o$, instead of choosing its $M_2$ rows that span the $M_2$-dimensional subspace along which the energy of signal is minimum as in Section \ref{sec:maxrank}, we choose $M_2$ identical rows for $\mPhi_o$ each as the $1$-dimensional subspace along which the energy of signal is minimum, and thus decrease the noise energy (and therefore increase the effective SNR) by averaging over all $M_2$ measurements.

The following theorem introduces the test for performing (\ref{eq:test1}) based on measurement design in (\ref{eq:measPrecS}) and (\ref{eq:measPresO}).
\begin{thm}
\label{thm:fccd}
Consider the following test for compressive detection of subspace signal $\mH \vx$ 
\beq
\cT = \frac{M_1 \sum\limits_{n=1}^{\Nb} |\bar{z}_s[n]|^2}{M_2 \sum\limits_{n=1}^{\Nb} |\bar{z}_o[n]|^2}\overset{\cH_1}{\underset{\cH_0}{\gtrless}} \gamma
\label{eq:testStatPrec}
\eeq
The probability of false alarm $\pfa$ and the probability of detection $\pdet$ are then expressed as
\beq
\pfa=\Qfun_{F(\Nb,\Nb)}(\gamma),
\label{eq:pfaPrec}
\eeq
and
\beq
\pdet=\Qfun_{F(\Nb,\Nb)}\Big(\frac{\sigma_0^2+M_2 \sigma_x^2 \rho_N^2}{\sigma_0^2+M_1 \sigma_x^2 \rho_1^2}\gamma\Big).
\label{eq:pdPrec}
\eeq
\end{thm}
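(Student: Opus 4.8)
The plan is to find the exact law of the two averaged scalars $\bar{z}_s[n]$ and $\bar{z}_o[n]$ under each hypothesis and then read off $\cT$ in (\ref{eq:testStatPrec}) as a scaled ratio of two independent chi-squared variables, i.e.\ an $F$-statistic. First I would exploit the fact that all $M_1$ rows of $\mPhi_s$ in (\ref{eq:measPrecS}) equal the common vector $\frac{1}{\sqrt{M}}\vu_1^\tr$: every entry of $\vz_s[n]$ then carries the identical signal term $\frac{1}{\sqrt{M}}\vu_1^\tr\mH\vx[n]$ but an independent noise term $\frac{1}{\sqrt{M}}\vu_1^\tr\vw_m[n]$, so the average in (\ref{eq:avgMeasS}) preserves the signal while dividing the noise variance by $M_1$. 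Writing $\vu_1^\tr\mH=\rho_1\vv_1^\tr$ from the SVD (with $\vv_1$ the first column of $\mV$) and using $\vv_1^\tr\vx[n]\sim\cN(0,\sigma_x^2)$, this yields $\bar{z}_s[n]\sim\cN(0,\sigma_0^2/(M_1 M))$ under $\cH_0$ and $\bar{z}_s[n]\sim\cN(0,(\sigma_0^2+M_1\sigma_x^2\rho_1^2)/(M_1 M))$ under $\cH_1$, with the analogous statement for $\bar{z}_o[n]$ obtained by replacing $(\vu_1,\rho_1,M_1)$ with $(\vu_N,\rho_N,M_2)$.

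Next, since the temporal samples $n=1,\ldots,\Nb$ involve independent draws of both signal and noise, the $\bar{z}_s[n]$ are i.i.d., so $\sum_{n}|\bar{z}_s[n]|^2$ equals its per-sample variance times a $\cChi_{\Nb}^2$ variable. Multiplying by the weight $M_1$ in the numerator of (\ref{eq:testStatPrec}) exactly cancels the $1/M_1$ in the averaged-noise variance, so that $M_1\sum_n|\bar{z}_s[n]|^2$ equals $(\sigma_0^2/M)$ times a $\cChi_{\Nb}^2$ variable under $\cH_0$ and $(\sigma_0^2+M_1\sigma_x^2\rho_1^2)/M$ times a $\cChi_{\Nb}^2$ variable under $\cH_1$; the same holds for the denominator with $M_2,\rho_N$. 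I would also check that numerator and denominator are independent: their noise parts come from disjoint measurement devices, and their signal parts $\rho_1\vv_1^\tr\vx[n]$ and $\rho_N\vv_N^\tr\vx[n]$ are uncorrelated jointly Gaussian (because $\vv_1\perp\vv_N$), hence independent.

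Finally, under $\cH_0$ the common scale $\sigma_0^2/M$ cancels and $\cT$ is a ratio of two independent $\cChi_{\Nb}^2$ variables, which is exactly $F_{\Nb,\Nb}$, giving $\pfa=\Qfun_{F(\Nb,\Nb)}(\gamma)$ as in (\ref{eq:pfaPrec}). Under $\cH_1$ the two scales differ, so multiplying $\cT$ by $\frac{\sigma_0^2+M_2\sigma_x^2\rho_N^2}{\sigma_0^2+M_1\sigma_x^2\rho_1^2}$ restores an $F_{\Nb,\Nb}$ variable, and evaluating the tail at $\gamma$ produces (\ref{eq:pdPrec}). The only real difficulty is the variance bookkeeping---tracking how averaging $M_1$ (resp.\ $M_2$) copies with shared signal but independent noise rescales things, and confirming that the explicit $M_1,M_2$ weights in (\ref{eq:testStatPrec}) are precisely what is needed to make the $\cH_0$ ratio an exact $F$. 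Note that, unlike Theorem \ref{theorem1}, no bounding is required here: each sub-detector projects onto a single singular direction, so each averaged measurement has a scalar variance and the result is exact rather than sandwiched between two $F$ tails.
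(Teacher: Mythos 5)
Your proposal is correct and follows essentially the same route as the paper: derive the scalar Gaussian laws of $\bar{z}_s[n]$ and $\bar{z}_o[n]$ under each hypothesis (your variances $(\sigma_0^2+M_1\sigma_x^2\rho_1^2)/(M_1M)$ and $(\sigma_0^2+M_2\sigma_x^2\rho_N^2)/(M_2M)$ agree with the paper's after expanding), then form the weighted ratio as an exact $F_{\Nb,\Nb}$ statistic. The paper compresses the second half into ``steps similar to Theorem \ref{theorem1}''; you usefully make explicit the cancellation produced by the $M_1,M_2$ weights, the independence of numerator and denominator via $\vv_1\perp\vv_N$, and why no Rayleigh--Ritz bounding is needed here.
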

\begin{proof}
To derive the probability of false alarm $\pfa$ and the probability of detection $\pdet$, we first notice that  if the Gaussian noise is i.i.d, then it is easy to observe that the distribution of $\bar{z}_o[n]$ is
\beq
\bar{z}_o[n] \sim \left\{ \begin{array}{ll}  \cN(0,\frac{\sigma_0^2}{M M_2}) & \mathrm{under~} \cH_0,  \\ \\
 \cN(0,\frac{\sigma_0^2}{M M_2}+\frac{1}{M}\sigma_x^2 \rho_N^2)& \mathrm{under~} \cH_1. \end{array}\right.
\label{eq:vzs2}
\eeq
Similarly, the distribution of $\bar{z}_s[n]$ is
\beq
\bar{z}_s[n] \sim \left\{ \begin{array}{ll}  \cN(0,\frac{\sigma_0^2}{M M_1}) & \mathrm{under~} \cH_0,  \\ \\
 \cN(0,\frac{\sigma_0^2}{M M_1}+\frac{1}{M}\sigma_x^2 \rho_1^2)& \mathrm{under~} \cH_1, \end{array}\right.
\label{eq:vzs3}
\eeq
Then by following steps similar to those in Theorem \ref{theorem1}, (\ref{eq:pfaPrec}) and (\ref{eq:pdPrec}) are concluded.
\end{proof}
{\revcol
The Fully-Correlated Compressive Detector of (\ref{eq:measPrecS}) and (\ref{eq:measPresO}) whose probabilities of false alarm and detection were derived in Theorem \ref{thm:fccd} is optimum in terms of providing the largest difference between the expected energy of its two set of samplers. However as discussed earlier, a major measure of interest in statistical hypothesis testing is the power of the test.
The following corollary discusses the optimality of the proposed Fully-Correlated Compressive design and detector in terms of the power of the test.

\begin{cor}
\label{corol2}
Among all fully-correlated measurement designs of the form
\beq
\mPhi_s \triangleq  \frac{1}{\sqrt{M}} \vone_{M_1,1} \otimes \vt_1^\tr,
\label{eq:measPrecS_gen}
\eeq
and 
\beq
\mPhi_o \triangleq \frac{1}{\sqrt{M}} \vone_{M_2,1} \otimes \vt_{2}^\tr,
\label{eq:measPresO_gen}
\eeq
with $\|\vt_1\|_2=\|\vt_2\|_2=1$ and $\vt_1^\tr \vt_2=0$, the one proposed in (\ref{eq:measPrecS}) and (\ref{eq:measPresO}) delivers the highest $\pdet$ for a fixed $\pfa$ for performing the test given in (\ref{eq:testStatPrec}).
\begin{proof}
By taking steps similar to the proof of Theorem \ref{thm:fccd}, the probability of false alarm can be derived as $\pfa=\Qfun_{F(\Nb,\Nb)}(\gamma)$. Therefore, for a given $\pfa$, $M_1$, $M_2$, and $\Nb$, we can write
\beq
\pdet=\Qfun_{F(\Nb,\Nb)}\Big(\frac{\sigma_0^2+M_2 \sigma_x^2 \vt_2^T \mH \mH^T \vt_2 }{\sigma_0^2+M_1 \sigma_x^2  \vt_1^T \mH \mH^T \vt_1}\Qfun_{F(\Nb,\Nb)}^{-1}(\pfa)\Big).
\label{eq:pdPrecGen}
\eeq
From (\ref{eq:pdPrecGen}), it is clear that the most powerful design is the one that minimizes $\frac{\sigma_0^2+M_2 \sigma_x^2 \vt_2^T \mH \mH^T \vt_2 }{\sigma_0^2+M_1 \sigma_x^2  \vt_1^T \mH \mH^T \vt_1}$, which in turn is the one that minimizes $ \vt_2^T \mH \mH^T \vt_2$ and maximizes $ \vt_1^T \mH \mH^T \vt_1$. Then by employing the Rayleigh-Ritz theorem \cite[Chapter 8]{zhang2011matrix} it can be easily proven that the optimum design is $\vt_1=\vu_1$ and $\vt_2=\vu_N$.
\end{proof}
\end{cor}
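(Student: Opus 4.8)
The plan is to mirror the argument of Corollary \ref{corol0}. First I would establish that the probability of false alarm is unaffected by the particular choice of $\vt_1$ and $\vt_2$: under $\cH_0$ the subspace signal is absent, so the distributions in (\ref{eq:vzs2}) and (\ref{eq:vzs3}) reduce to $\bar{z}_o[n]\sim\cN(0,\sigma_0^2/(M M_2))$ and $\bar{z}_s[n]\sim\cN(0,\sigma_0^2/(M M_1))$ irrespective of the measurement directions, since each row of $\mPhi_s$ and $\mPhi_o$ has unit norm. Repeating the chi-squared and $F$-ratio bookkeeping from the proof of Theorem \ref{thm:fccd} then yields $\pfa=\Qfun_{F(\Nb,\Nb)}(\gamma)$ for every admissible design, so the threshold $\gamma$ is pinned down by $\pfa$ alone.

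Next I would express the probability of detection for a generic admissible design. Under $\cH_1$ the variances in (\ref{eq:vzs2}) and (\ref{eq:vzs3}) become $\sigma_0^2/(M M_2)+\sigma_x^2(\vt_2^\tr\mH\mH^\tr\vt_2)/M$ and $\sigma_0^2/(M M_1)+\sigma_x^2(\vt_1^\tr\mH\mH^\tr\vt_1)/M$, because the signal variance captured along a unit direction $\vt_i$ is exactly $\vt_i^\tr\mH\mH^\tr\vt_i$. Carrying these through the same steps produces (\ref{eq:pdPrecGen}). Since $\Qfun_{F(\Nb,\Nb)}(\cdot)$ is a tail probability and hence strictly decreasing, maximizing $\pdet$ over all admissible $(\vt_1,\vt_2)$ is equivalent to minimizing the scalar ratio
\[
\frac{\sigma_0^2+M_2\sigma_x^2\,\vt_2^\tr\mH\mH^\tr\vt_2}{\sigma_0^2+M_1\sigma_x^2\,\vt_1^\tr\mH\mH^\tr\vt_1}.
\]
The numerator depends only on $\vt_2$, the denominator only on $\vt_1$, and all coefficients are positive, so the ratio is made smallest by simultaneously minimizing the quadratic form $\vt_2^\tr\mH\mH^\tr\vt_2$ and maximizing $\vt_1^\tr\mH\mH^\tr\vt_1$ over unit vectors.

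To finish, I would invoke the Rayleigh--Ritz theorem \cite[Chapter 8]{zhang2011matrix}: subject to $\|\vt_1\|_2=1$ the form $\vt_1^\tr\mH\mH^\tr\vt_1$ is maximized by the top eigenvector $\vt_1=\vu_1$ (attaining $\rho_1^2$), while subject to $\|\vt_2\|_2=1$ the form $\vt_2^\tr\mH\mH^\tr\vt_2$ is minimized by the bottom eigenvector $\vt_2=\vu_N$ (attaining $\rho_N^2$), which reproduces the design (\ref{eq:measPrecS}) and (\ref{eq:measPresO}). The one point that needs care, and the analogue of the subtlety in Corollary \ref{corol0}, is the coupling introduced by the constraint $\vt_1^\tr\vt_2=0$: a priori the two separately optimal directions could be incompatible with orthogonality. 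Here they are not, because $\vu_1$ and $\vu_N$ are distinct columns of the orthogonal matrix $\mU$ and therefore automatically satisfy $\vu_1^\tr\vu_N=0$ (these are genuinely different directions thanks to $\rho_1^2>\rho_N^2$ in (\ref{ineq:condnum})). Consequently the jointly constrained optimum coincides with the product of the two unconstrained optima, establishing that the proposed design is the most powerful among all fully-correlated designs.
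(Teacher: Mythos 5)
Your proposal is correct and follows essentially the same route as the paper: fix $\gamma$ via the design-independent $\pfa$, derive (\ref{eq:pdPrecGen}), note the ratio decouples in $\vt_1$ and $\vt_2$, and apply Rayleigh--Ritz. The only difference is that you explicitly verify the orthogonality constraint $\vt_1^\tr\vt_2=0$ is non-binding at the optimum because $\vu_1$ and $\vu_N$ are distinct columns of the orthogonal matrix $\mU$ --- a small point the paper leaves implicit, and a worthwhile one.
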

Again, similar to the Maximally-Uncorrelated case, Theorem \ref{thm00} and Corollary \ref{corol2} result in identical solution although they are trying to optimize different cost functions, which is because the proposed detector is in fact an $F$-test detector working based on the difference between the variances of the two sets of measurements.
}

The performance of Fully Correlated Detector has a clear connection to the way we choose the number of samplers. The following Corollary discusses this.

\begin{cor}
\label{cor1}
For a given number of samplers $M$ and probability of false alarm $\pfa$, the performance of the detector $\pdet$ is maximized when $M_2=1$ and $M_1=M-1$.
\end{cor}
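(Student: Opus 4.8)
The plan is to read off the two closed forms supplied by Theorem \ref{thm:fccd} and exploit a feature that is special to the fully-correlated design: the degrees of freedom of the $F$-distribution are $(\Nb,\Nb)$ in \emph{both} $\pfa=\Qfun_{F(\Nb,\Nb)}(\gamma)$ and $\pdet=\Qfun_{F(\Nb,\Nb)}\big(\eta\,\gamma\big)$, and they do not depend on how $M$ is split into $(M_1,M_2)$. (This is the crucial contrast with the maximally-uncorrelated detector, where the degrees of freedom were $M_1\Nb$ and $M_2\Nb$.) Consequently, for a prescribed $\pfa$ the threshold $\gamma=\Qfun_{F(\Nb,\Nb)}^{-1}(\pfa)$ is the same number for every admissible partition, so comparing detectors across partitions reduces to comparing a single scalar.

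Next I would substitute that fixed $\gamma$ into the detection formula, writing
\beq
\pdet=\Qfun_{F(\Nb,\Nb)}\Big(\eta\,\Qfun_{F(\Nb,\Nb)}^{-1}(\pfa)\Big),\qquad
\eta\triangleq\frac{\sigma_0^2+M_2\,\sigma_x^2\,\rho_N^2}{\sigma_0^2+M_1\,\sigma_x^2\,\rho_1^2}.
\eeq
Since $\Qfun_{F(\Nb,\Nb)}$ is a strictly decreasing tail function and $\gamma>0$, maximizing $\pdet$ over the partition is exactly equivalent to minimizing the scalar coefficient $\eta$ subject to $M_1+M_2=M$ with $M_1,M_2\ge 1$.

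I would then eliminate the constraint by setting $M_1=M-M_2$, so that $\eta$ becomes a function of the single integer variable $M_2\in\{1,\dots,M-1\}$. As $M_2$ increases the numerator $\sigma_0^2+M_2\sigma_x^2\rho_N^2$ is (weakly) increasing because $\rho_N^2\ge 0$, while the denominator $\sigma_0^2+(M-M_2)\sigma_x^2\rho_1^2$ is strictly decreasing because $\rho_1^2>0$ by assumption (\ref{ineq:condnum}). Hence $\eta$ is strictly increasing in $M_2$, and it attains its minimum at the smallest feasible value $M_2=1$, i.e.\ $M_1=M-1$, which by the monotonicity of $\Qfun_F$ yields the largest $\pdet$. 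This establishes the claim.

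There is no genuine analytic obstacle here; the entire content is the monotonicity argument, and the only point deserving care is the edge case $\rho_N=0$, for which the numerator of $\eta$ is constant. I would note explicitly that even in that case the denominator is still strictly decreasing in $M_2$ (since $\rho_1^2>0$), so $\eta$ remains strictly increasing and the optimum $M_2=1$, $M_1=M-1$ is unaffected.
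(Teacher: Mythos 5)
Your argument is correct and is essentially the paper's own proof: the paper likewise observes from (\ref{eq:pfaPrec}) that $\gamma$ is independent of the split $(M_1,M_2)$ and then reads off from (\ref{eq:pdPrec}) that $\pdet$ is maximized by taking $M_2$ minimal and $M_1$ maximal. Your write-up merely makes the monotonicity of the ratio and the $\rho_N=0$ edge case explicit, which the paper leaves implicit.
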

\begin{proof}
First, we notice from (\ref{eq:pfaPrec}) that the threshold level $\gamma$ is independent of the choice of $M_1$ and $M_2$. Then from (\ref{eq:pdPrec}) it is clear that $\pdet$ achieves its maximum value when $M_2$ reaches its minimum value and $M_1$ reaches its maximum value. 
\end{proof}

The above Corollary asserts that in the case of Fully-Correlated Detector, for a fixed hardware budget, in terms of the total number of  samplers $M=M_1+M_2$, the best performance, in terms of $\pdet$, is obtained if we employ just one sampler for $\mPhi_o$ and spend the rest of the hardware budget on $\mPhi_s$. Besides, by putting the optimum values $M_2=1$ and $M_1=M-1$ in (\ref{eq:pdPrec}) it is clear that the performance enhances as the total number of measurement devices $M$ grows.
\begin{figure}[t!]
\begin{center}
\includegraphics[width=0.7\textwidth,height=0.5\textwidth]{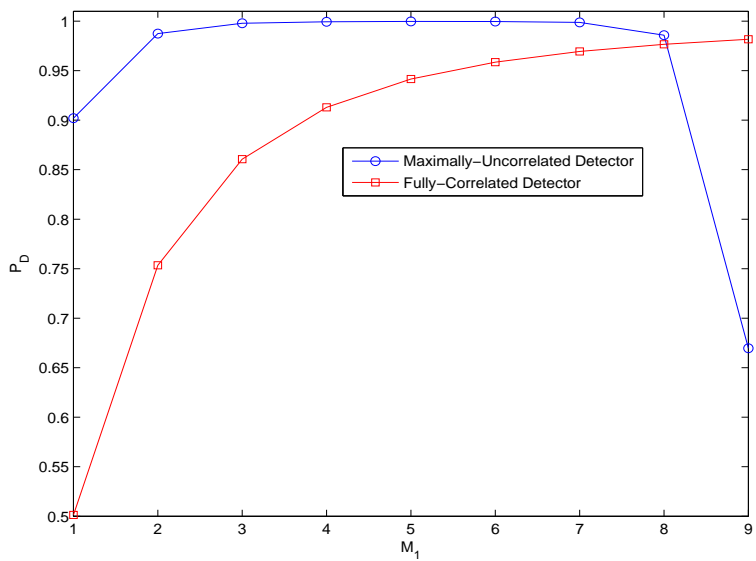}
\caption{\revcoltwo Probability of Detection as a function of $M_1$ for the two detectors introduced in Sections \ref{sec:maxrank} and \ref{sec:maxsnr}. The total number of samplers is $M=10$, the Signal-to-Noise Ratio is $\frac{\sigma_x^2}{\sigma_0^2}=0 \mathrm{dB}$ and the largest and smallest eigenvalues of $\mH \mH^\tr$ are $\rho_1^2=10$ and $\rho_N^2$=0, respectively. The Probability of False Alarm is $\pfa=0.01$ and the number of temporal measurements is $\Nb=5$.}
\label{fig:PdVsM1}
\end{center}
\end{figure}
Figure \ref{fig:PdVsM1} demonstrates an example of how the probability of detection for Fully-Correlated detector changes with the number of maximum energy samplers $M_1$. The performance curve of the Maximally-Uncorrelated case is also shown for comparison. As it can be observed, while the performance of Fully-Correlated detector clearly enhances with increasing $M_1$, the performance of Maximally-Uncorrelated Detector does not show such a monotonic relationship. The best performance of the Maximally-Uncorrelated Detector here is achieved for $M_1=M_2=5$. The probabilities of detection of the two detectors here have been calculated from (\ref{eq:pdPrec}) and (\ref{eq:pdlb}). The parameters of the Maximally-Uncorrelated detector has been chosen such that $\pdet=P_{D,lb}=P_{D,ub}$.

{\bf Remark 4}: Comparing (\ref{eq:th00eq1}) and (\ref{eq:th00eq2}) to (\ref{eq:th0eq1}) and (\ref{eq:th0eq2}) it may seem that the performance of the Fully-Correlated detector is always superior to the Maximally-Uncorrelated detector as the optimizations in (\ref{eq:th00eq1}) and (\ref{eq:th00eq2}) are unconstrained compared to the optimizations in (\ref{eq:th0eq1}) and (\ref{eq:th0eq2}) which are constrained. However, we should notice that this is not always the case as Maximally-Uncorrelated detector provides a higher {\it diversity} in its collected samples, which is reflected by the higher {degrees of freedom} in (\ref{eq:pfa}), (\ref{eq:pdlb}) and (\ref{eq:pdub}) compared to (\ref{eq:pfaPrec}) and (\ref{eq:pdPrec}). For example consider the case when $\vx[n]$ is present but lies orthogonal to the strongest right singular vector of  $\mH$. Then for the Fully-Correlated samplers the energy of samples collected by all of the $M_1$ samplers $\mPhi_s$ includes only the energy of noise while for the Maximally-Uncorrelated Compressive Detector only the output energy of the first max-energy sampler is equivalent to the noise energy and the output of the rest of $M_1-1$ samplers include both the signal and the noise energy.
{\revcoltwo This can also be seen from the example of Figure \ref{fig:PdVsM1} where the best performance of the Fully-Correlated detector occurring at $M_1=M-1=9$ is inferior to the best performance of Maximally-Uncorrelated detector occurring at $M_1=M/2=5$.}
\qed

\subsection{Special Case of Known Noise Variance}
\label{sec:known}
As discussed earlier in this Section, the idea of having two sets of measurement devices $\mPhi_s$ and $\mPhi_o$ is to have two sets of measurements with maximum difference in their variances when the signal truly exists and equivalent variances when the signal does not exist and then making decision about the existence of signal based on this difference. However, if the variance of noise is known, we can simplify the proposed design by discarding $\mPhi_o$ as we can simply compare the variance of the output of $\mPhi_s$ to the known noise variance. In other words, we can employ only $M_1$ measurement devices determined by the rows of $\mPhi_s$ and then compare the power of measurements taken by them to the power of noise: if they are almost the same then we decide that no signal exists, if not then we decide that signal exists.

Similar to the two detectors proposed in sections \ref{sec:maxrank} and \ref{sec:maxsnr}, we can have two detectors when the noise variance is known. The following two propositions express these two detectors and their performances. We remark that since $\mPhi_o$ is discarded when noise variance is known, we will have $M_2=0$ and therefore the whole number of measurement devices is $M=M_1$.

\begin{prop} 
\label{lem1}
If we design the measurement matrix according to (\ref{eq:measS}), then in the case of known $\sigma_0^2$, the test in (\ref{eq:testStat}) can be replaced by 
\beq
\cT = \frac{\sum\limits_{n=1}^{\Nb} \vz_s[n]^\tr \vz_s[n]}{\sigma_0^2/M_1}\overset{\cH_1}{\underset{\cH_0}{\gtrless}} \gamma.
\label{eq:testStatKnown}
\eeq
The Probability of False Alarm, $\pfa$, for this test is computed as
\beq
\pfa=\Qfun_{\chi^2 (M_1 \Nb)}(\gamma),
\label{eq:pfaKnown}
\eeq
and the Probability of Detection, $\pdet$, is bounded as
\beq
P_{D,lb} \le \pdet \le P_{D,ub}
\label{eq:pdKnown}
\eeq
where $P_{D,lb}$ and $P_{D,ub}$ are, respectively, the lower bound and the upper bound for the probability of detection which will take the following formulas
\beq
P_{D,lb}= \Qfun_{\chi^2(M_1 \Nb)}({\eta_{lb}^\prime}\gamma),
\label{eq:pdlbKnown}
\eeq
and
\beq
P_{D,ub}= \Qfun_{\chi^2 (M_1 \Nb)}({\eta_{ub}^\prime}\gamma),
\label{eq:pdubKnown}
\eeq
where $\eta_{lb}^\prime \triangleq\frac{\sigma_0^2}{\sigma_0^2+\sigma_x^2 \rho_{M_1}^2}$, $\eta_{ub}^\prime\triangleq\frac{\sigma_0^2}{\sigma_0^2+\sigma_x^2 \rho_{1}^2}$, and $\Qfun_{\chi^2 (d)}(x)$ is the {\it tail probability} of the chi-squared distribution with parameter (degree of freedom) $d$ at point $x$.
\end{prop}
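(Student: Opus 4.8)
The plan is to follow the same line of argument as in the proof of Theorem \ref{theorem1}, the essential simplification being that the denominator of the test statistic is now the \emph{deterministic} quantity $\sigma_0^2/M_1$ rather than a second random energy, so the governing law collapses from an $F$-distribution to a (scaled) chi-squared distribution. First I would specialise the distribution of $\vz_s[n]$ in (\ref{eq:vzs1}) to the present setting $M_2=0$, $M=M_1$: under $\cH_0$ one has $\vz_s[n]\sim\cN(\vzero,\tfrac{\sigma_0^2}{M_1}\mI_{M_1})$, so each scaled entry $z_{s,i}[n]/\sqrt{\sigma_0^2/M_1}$ is standard normal and mutually independent across $i$ and $n$. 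Summing the $M_1\Nb$ independent squares immediately gives $\cT\sim\chi^2_{M_1\Nb}$ under $\cH_0$, whence $\pfa=\Prob(\cT>\gamma\mid\cH_0)=\Qfun_{\chi^2(M_1\Nb)}(\gamma)$, establishing (\ref{eq:pfaKnown}).

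For the detection bounds I would, exactly as before, whiten the numerator under $\cH_1$. With $\mA_s\triangleq\frac{1}{M_1}(\sigma_0^2\mI_{M_1}+\sigma_x^2\mT_s\mLambda_s\mT_s^\tr)$, the specialisation of (\ref{eq:vzs1}) gives $\sum_{n=1}^{\Nb}\vz_s[n]^\tr\mA_s^{-1}\vz_s[n]\sim\chi^2_{M_1\Nb}$ under $\cH_1$. Since the eigenvalues of $\mA_s$ are $(\sigma_0^2+\sigma_x^2\rho_i^2)/M_1$ for $i=1,\ldots,M_1$, the Rayleigh--Ritz theorem then sandwiches this whitened form between $\lambda_{\min}(\mA_s^{-1})$ and $\lambda_{\max}(\mA_s^{-1})$ times $\sum_n\vz_s[n]^\tr\vz_s[n]$, where $\lambda_{\min}(\mA_s^{-1})=M_1/(\sigma_0^2+\sigma_x^2\rho_1^2)$ and $\lambda_{\max}(\mA_s^{-1})=M_1/(\sigma_0^2+\sigma_x^2\rho_{M_1}^2)$.

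The final step is to rewrite these two inequalities in terms of $\cT=\frac{M_1}{\sigma_0^2}\sum_n\vz_s[n]^\tr\vz_s[n]$. Dividing through by the appropriate factors converts the sandwich into $\frac{1}{\eta_{lb}^\prime}\chi^2_{M_1\Nb}\le\cT\le\frac{1}{\eta_{ub}^\prime}\chi^2_{M_1\Nb}$ with $\eta_{lb}^\prime,\eta_{ub}^\prime$ as defined in the statement; taking tail probabilities of both inequalities against the threshold $\gamma$ then yields $P_{D,lb}=\Qfun_{\chi^2(M_1\Nb)}(\eta_{lb}^\prime\gamma)\le\pdet\le\Qfun_{\chi^2(M_1\Nb)}(\eta_{ub}^\prime\gamma)=P_{D,ub}$, which is (\ref{eq:pdlbKnown})--(\ref{eq:pdubKnown}). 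There is no deep obstacle here; the only point requiring care is the bookkeeping---keeping the $M=M_1$ normalisation consistent throughout, so that the denominator $\sigma_0^2/M_1$ in (\ref{eq:testStatKnown}) matches the per-entry variance under $\cH_0$, and tracking the direction of the inequalities when the eigenvalue sandwich is transferred onto the tail probabilities, so that (because $\Qfun_{\chi^2}$ is decreasing and $\eta_{ub}^\prime\le\eta_{lb}^\prime$) the smaller multiplier $\eta_{ub}^\prime$ produces the upper bound $P_{D,ub}$ and the larger multiplier $\eta_{lb}^\prime$ produces the lower bound $P_{D,lb}$.
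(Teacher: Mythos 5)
Your proof is correct and is exactly the argument the paper intends: the paper states this proposition without an explicit proof, treating it as a direct specialization of Theorem \ref{theorem1}, and your chain (chi-squared law of the whitened numerator, Rayleigh-Ritz sandwich with the eigenvalues $(\sigma_0^2+\sigma_x^2\rho_i^2)/M_1$ of $\mA_s$, then tail probabilities against the now-deterministic denominator $\sigma_0^2/M_1$) is precisely that specialization. The two points needing care --- keeping $M=M_1$ consistent and noting that the decreasing $\Qfun_{\chi^2}$ makes the smaller multiplier $\eta_{ub}^\prime$ give the upper bound --- are both handled correctly.
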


\begin{prop}
\label{lem2}
If we design the measurement matrix according to (\ref{eq:measPrecS}), then in the case of known $\sigma_0^2$, the test in (\ref{eq:testStatPrec}) can be replaced by 
\beq
\cT = \frac{\sum\limits_{n=1}^{\Nb} |\bar{z}_s[n]|^2}{\sigma_0^2/M_1^2}\overset{\cH_1}{\underset{\cH_0}{\gtrless}} \gamma
\label{eq:testStatPrecKnown}
\eeq
The Probability of False Alarm, $\pfa$, for this test is 
\beq
\pfa=\Qfun_{\chi^2(\Nb)}(\gamma),
\label{eq:pfaPrecKnown}
\eeq
and
\beq
\pdet=\Qfun_{\chi^2(\Nb)}\Big(\frac{\sigma_0^2}{\sigma_0^2+M_1 \sigma_x^2 \rho_1^2}\gamma\Big).
\label{eq:pdPrecKnown}
\eeq
\end{prop}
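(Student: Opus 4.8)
The plan is to reduce the test statistic $\cT$ in (\ref{eq:testStatPrecKnown}) to a single (scaled) chi-squared variable under each hypothesis, exactly as was done for the ratio statistic in the proof of Theorem \ref{thm:fccd}, and then read off the two tail probabilities. The starting point is the scalar distribution of $\bar{z}_s[n]$ already established in (\ref{eq:vzs3}). Since $\mPhi_o$ is discarded in the known-variance case, I set $M_2=0$, hence $M=M_1$, and substituting this into (\ref{eq:vzs3}) gives $\bar{z}_s[n]\sim\cN(0,\sigma_0^2/M_1^2)$ under $\cH_0$ and $\bar{z}_s[n]\sim\cN\big(0,\,\sigma_0^2/M_1^2+\sigma_x^2\rho_1^2/M_1\big)$ under $\cH_1$.

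First, under $\cH_0$ I would normalise each $\bar{z}_s[n]$ by its standard deviation $\sqrt{\sigma_0^2/M_1^2}$, so that $\bar{z}_s[n]/\sqrt{\sigma_0^2/M_1^2}\sim\cN(0,1)$ and therefore $|\bar{z}_s[n]|^2/(\sigma_0^2/M_1^2)\sim\cChi^2_1$. Because the receiver-noise vectors at distinct time indices are independent, the $\bar{z}_s[n]$ are mutually independent across $n$, so summing the $\Nb$ independent $\cChi^2_1$ terms gives $\cT\sim\cChi^2_{\Nb}$ under $\cH_0$. This immediately yields $\pfa=\Prob(\cT>\gamma\mid\cH_0)=\Qfun_{\chi^2(\Nb)}(\gamma)$, establishing (\ref{eq:pfaPrecKnown}).

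Next, under $\cH_1$ the per-sample variance is inflated to $\sigma_1^2\triangleq\sigma_0^2/M_1^2+\sigma_x^2\rho_1^2/M_1$. I would factor the statistic as $\cT=\big(\sigma_1^2/(\sigma_0^2/M_1^2)\big)\,\sum_{n=1}^{\Nb}|\bar{z}_s[n]|^2/\sigma_1^2$, where the sum is again a genuine $\cChi^2_{\Nb}$ variable by the same normalisation-and-independence argument. The only real computation is the leading constant, which simplifies to $\sigma_1^2/(\sigma_0^2/M_1^2)=1+M_1\sigma_x^2\rho_1^2/\sigma_0^2=(\sigma_0^2+M_1\sigma_x^2\rho_1^2)/\sigma_0^2$. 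Thus under $\cH_1$ the statistic is a $\cChi^2_{\Nb}$ variable scaled by this factor, and $\pdet=\Prob(\cT>\gamma\mid\cH_1)=\Qfun_{\chi^2(\Nb)}\big(\tfrac{\sigma_0^2}{\sigma_0^2+M_1\sigma_x^2\rho_1^2}\gamma\big)$, which is exactly (\ref{eq:pdPrecKnown}).

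I expect no genuine obstacle: the argument is entirely parallel to Theorem \ref{thm:fccd}, with the random denominator of the $F$-type statistic replaced by the known constant $\sigma_0^2/M_1^2$, so that the ratio of two chi-squared variables collapses to a single (scaled) chi-squared and the degrees of freedom drop from $(\Nb,\Nb)$ of the $F$-distribution to the single $\Nb$ here. The one place to be careful is the bookkeeping of the variance after setting $M=M_1$ and the resulting scaling constant, since an arithmetic slip there would propagate directly into the argument of $\Qfun_{\chi^2(\Nb)}$ in $\pdet$.
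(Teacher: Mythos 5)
Your proof is correct: the paper states this proposition without an explicit proof, presenting it as an immediate consequence of the machinery in Theorem \ref{thm:fccd}, and your argument — specializing (\ref{eq:vzs3}) to $M=M_1$ (since $M_2=0$), normalizing to get a $\cChi^2_{\Nb}$ variable under each hypothesis, and computing the variance ratio $(\sigma_0^2+M_1\sigma_x^2\rho_1^2)/\sigma_0^2$ — is exactly the intended derivation. The bookkeeping of the scaling constant is right, so both (\ref{eq:pfaPrecKnown}) and (\ref{eq:pdPrecKnown}) follow as you claim.
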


{\revcoltwo
\subsection{Compressive Detection in Presence of Known Subspace Interference}
\label{sec:interference}

Extending the above results to the subspace interference case is straightforward and is similar to the same scenario for classical uncompressive detection of deterministic subspace signals in presence of interference in \cite{scharf94}.
Suppose that the subspace signal model in (\ref{eq:sysModel}) is replaced by
\beq
\vs[n] = \mH \vx[n] + \mG \vt[n],
\label{eq:sysModelInterference}
\eeq
where $\mG$ is an $N \times K^\prime$ matrix and the second term on the right-hand side of (\ref{eq:sysModelInterference}) represents the subspace interference. If $\vs[n]$ were available, the interference could be cancelled by left multiplying (\ref{eq:sysModelInterference}) by matrix $\mP_\mG^{\perp}=\mI - \mG (\mG^\tr \mG)^{-1} \mG^\tr$ to obtain
\beq
\mP_\mG^{\perp} \vs[n] = \mP_\mG^{\perp} \mH \vx[n].
\label{eq:sysModelInterference2}
\eeq
Comparing (\ref{eq:sysModelInterference2}) to (\ref{eq:sysModel}), it can be easily deduced that in this scenario to design the measurement vectors, i.e., rows of $\mPhi$, the left-singular vectors of matrix $\mH$ in the interference-free scenario in (\ref{eq:measS}) and (\ref{eq:measO}) must be replaced by  left-singular vectors of matrix $\mP_\mG^{\perp} \mH$ right-multiplied by matrix $\mP_\mG^{\perp}$.} % and apply the detection methods proposed in this section on the resulting interference-free measurements $\mP_\mG^{\perp}  \vy[n],~n=1,\ldots,\Nb$.
Otherwise, all the detector developments are similar to the previous subsections.
% %%%%%%%%%%%%%%%%%%%%%%%%%%%%%%%%%%%%%%%%%%%%%%%%%%%
% %%%%%%%%%%%%%%%%%%%          SECTION IV            %%%%%%%%%%%%%%%%%%%
% %%%%%%%%%%%%%%%%%%%%%%%%%%%%%%%%%%%%%%%%%%%%%%%%%%%

\section{Compressive Detector with Non-Ideal Measurements}
\label{sec:nonideal}

In this section we investigate the case when due to the lack of complete knowledge about the subspace or due to the limitations in designing hardware with infinite precision, the measurement vectors do not exactly match the singular vectors of matrix $\mH$. We consider two models for describing this uncertainty. The first model, defines the finite precision using additional observation noise. The second model, on the other hand, chooses the measurement vector from a set of realizable measurement vectors (hardware) which are approximations (or quantized versions) of infinite-precision real-valued singular vectors. 

\subsection{Compressive Detection with Imprecise Measurements}
\label{subsec:imprecise}
{\revcoltwo
In this section we analyze the detector when the measurement device is not precise. This imprecision in measurements can be modeled as \cite{distilled}
\beq
\tilde{\vz}[n]=\vz[n] + \delta^{-1/2} \veps[n],
\eeq
where $\tilde{\vz}[n]$ denotes the imprecise compressive measurements, $\vz[n]$ denotes the basic measurements as introduced in Section \ref{sec:sysmodel}, $\veps[n]\sim\cN(\vzero_{M,1},\mI_{M})$, and $\delta$ represents the precision of measurements. Thus the imprecision is here modeled through additional observation or measurement noise, as in \cite{distilled}. 

Now, if we replace $\vz_s[n]$ and $\vz_o[n]$ in the test statistic (\ref{eq:testStat}) of Maximally-Uncorrelated detector by $\tilde{\vz}_s[n]\triangleq\tilde{\vz}[n]\Big|_{1:M_1}$ and $\tilde{\vz}_o[n]\triangleq\tilde{\vz}[n]\Big|_{M_1+1:M}$, }respectively, then it can be easily verified that the probability of false alarm will be the same as in (\ref{eq:pfa}), but the lower and upper bounds of probability of detection in (\ref{eq:pdlb}) and (\ref{eq:pdub}) will be replaced by 
\beq
P_{D,lb}=\Qfun_{F(M_1 \Nb,M_2 \Nb)}({\frac{\sigma_0^2+\sigma_x^2 \rho_{N-M_2+1}^2 + M \delta^{-1}}{\sigma_0^2+\sigma_x^2 \rho_{M_1}^2 + M \delta^{-1}}\gamma}),
\label{eq:Precpdlb}
\eeq 
and 
\beq 
P_{D,ub}=\Qfun_{F(M_1 \Nb,M_2 \Nb)}({\frac{\sigma_0^2+\sigma_x^2 \rho_{N}^2 + M \delta^{-1}}{\sigma_0^2+\sigma_x^2 \rho_{1}^2 + M \delta^{-1}}\gamma}),
\label{eq:Precpdub}
\eeq
respectively.
Comparing (\ref{eq:Precpdlb}) and (\ref{eq:Precpdub}) with (\ref{eq:pdlb}) and (\ref{eq:pdub}), it is clear that the performance degrades as the precision of measurement devices $\delta$ decreases. This can be however compensated by improving the precision through employing several, say $\nk$, identical measurement devices instead of each device and then averaging over all measurements. In other words, instead of $M$ measurement devices, we employ $M \nk$ to improve the performance. 
We call $\nk$ the {\it hardware budget factor} as it indicates the number of measurement devices that we employ.

In this case, the probability of false alarm will be still expressed as in (\ref{eq:pfa}) while the lower and upper bounds for probability of detection are modified as
\beq
P_{D,lb}=\Qfun_{F(M_1 \Nb,M_2 \Nb)}({\frac{\sigma_0^2+\nk \sigma_x^2 \rho_{N-M_2+1}^2 + M \delta^{-1}}{\sigma_0^2+\nk \sigma_x^2 \rho_{M_1}^2 + M \delta^{-1}}\gamma}),
\label{eq:PrecpdlbK}
\eeq 
and 
\beq 
P_{D,ub}=\Qfun_{F(M_1 \Nb,M_2 \Nb)}({\frac{\sigma_0^2+\nk \sigma_x^2 \rho_{N}^2 + M \delta^{-1}}{\sigma_0^2+\nk \sigma_x^2 \rho_{1}^2 + M \delta^{-1}}\gamma}),
\label{eq:PrecpdubK}
\eeq 
Comparing (\ref{eq:PrecpdlbK}) and (\ref{eq:PrecpdubK}) with (\ref{eq:pdlb}) and (\ref{eq:pdub}), the hardware budget factor $\nk$ for achieving a performance similar or better than the infinite precision case of $\delta^{-1}=0$ is obtained as
\beq
\nk \ge 1+\frac{M}{\delta \sigma_0^2}.
\label{eq:L}
\eeq

\subsection{Compressive Detection Using a Fixed Set of Measurement Vectors}
\label{subsec:fixed}
The second model in this section for modeling non-ideal measurement devices is formulated as choosing the measurement vectors from a fixed set of unit-norm orthogonal vectors whose members are not necessarily singular vectors of the system matrix $\mH$. {These vectors can be hardware-realizable approximation of the singular vectors, e.g. quantized versions of them, or simple down-samplers.} Assume that $M=M_1+M_2$ measurement vectors should be chosen out of $R \ge M$ uncorrelated columns of $N \times R$ matrix $\mPsi=[\vfi_1,\vfi_2,\ldots,\vfi_R]$ (notice that $M \mPsi^\tr \mPsi=\mI_R$). The rows of $\mPhi_s$ and $\mPhi_o$ then have to be chosen from the columns of $\mPsi$. 

%Here we only provide the Maximally-Uncorrelated design. The Fully-Correlated design will then be straightforward. \arg\limits_{\mPsi_{\cM}^\tr} \arg\limits_{\mPsi_{\cM}^\tr}

The optimum $\mPhi_s$ is the one that maximizes the energy of its output. If we want to follow the Maximally-Uncorrelated design of Section \ref{sec:maxrank}, then the rows of $\mPhi_s$ are the $M_1$ columns of $\mPsi$ for which $\|\mPhi_s \mH\|_F$ is maximized, i.e.,
\beq
\mPhi_s = \underset{\mPsi_{\cM}^\tr}{\arg} \max_{\cM\subset\cR:|\cM|=M_1} \trace (\mH \mH^\tr \mPsi_{\cM} \mPsi_{\cM}^\tr),
\eeq
where $\cR=\{1,2,\ldots,R\}$ and $\mPsi_{\cM}$ is the sub-matrix of $\mPsi$ consisting of columns determined by index set $\cM$.
Similarly, the rows of $\mPhi_o$ can be selected as the $M_2$ columns of $\mPsi$ for which $\|\mPhi_s \mH\|_F$ is minimized, i.e.,
\beq
\mPhi_o =  \underset{\mPsi_{\cM}^\tr}{\arg} \min_{\cM\subset\cR:|\cM|=M_2} \trace (\mH \mH^\tr \mPsi_{\cM} \mPsi_{\cM}^\tr)
\eeq

On the other hand, if we want to follow the Fully-Correlated design of Section \ref{sec:maxsnr}, then we should pick the two columns of $\mPsi$, say $\vfi_i$ and $\vfi_j$, that, respectively, maximize and minimize the signal energy $\sigma_x^2 \vfi^\tr \mH \mH^T \vfi$ and then design $\mPhi_s$ and $\mPhi_o$, respectively, as
%repeat it in all $M_1$ branches of the maximum energy sampler (i.e., all $M_1$ rows of $\mPhi_s$ are set to $\vfi_i^\tr$) and 
%also pick the column of $\mPsi$, say $\vfi_j$, that minimizes the signal energy $\sigma_x^2 \vfi_j^\tr \mH \mH^T \vfi_j$ %and repeat it in all $M_2$ branches of the minimum energy sampler (i.e., all $M_2$ rows of $\mPhi_o$ are set to $\vfi_j^\tr$).

\beq
\mPhi_s \triangleq  \frac{1}{\sqrt{M}} \vone_{M_1,1} \otimes \vfi_i^\tr,
\label{eq:measPrecSSpecific}
\eeq
and 
\beq
\mPhi_o \triangleq \frac{1}{\sqrt{M}} \vone_{M_2,1} \otimes \vfi_{j}^\tr.
\label{eq:measPresOSpecific}
\eeq

To see the performance of the detectors in this case, we first provide the detector's formulation when the noise variance is known. It is easy to verify that when the elements of $\cM$ are orthonormal, $\pfa$ is computed as in (\ref{eq:pfaKnown}) and (\ref{eq:pfaPrecKnown}) for the Maximally-Uncorrelated and Fully-Correlated detectors, respectively. However, $\pdet$ of Fully-Correlated detector as well as the bounds of $\pdet$ of Maximally-Uncorrelated detector will differ from what we have in (\ref{eq:pdlbKnown}), (\ref{eq:pdubKnown}) and (\ref{eq:pdPrecKnown}). The reason for this is that the rows of $\mPhi_s$ are not any longer the singular vectors of $\mH$ and therefore the set of singular values of $\mPhi_s \mH$ is not a subset of the set of singular values of $\mH$. In other words, (\ref{eq:pdlbKnown}), (\ref{eq:pdubKnown}) and (\ref{eq:pdPrecKnown}) should be respectively replaced by the following equations 

\beq
P_{D,lb}= \Qfun_{\chi^2(M_1 \Nb)}({\tilde{\eta}_{lb}}\gamma),
\label{eq:pdlbKnownNI}
\eeq
and
\beq
P_{D,ub}= \Qfun_{\chi^2 (M_1 \Nb)}({\tilde{\eta}_{ub}}\gamma),
\label{eq:pdubKnownNI}
\eeq
and
\beq
\pdet=\Qfun_{\chi^2(\Nb)}\Big(\frac{\sigma_0^2}{\sigma_0^2+M_1 \sigma_x^2 \tilde{\rho}_1^2}\gamma\Big).
\label{eq:pdPrecKnownNI}
\eeq
where ${\tilde{\eta}}_{lb} \triangleq\frac{\sigma_0^2}{\sigma_0^2+\sigma_x^2 {\tilde{\rho}}_{M_1}^{2}}$, 
$\tilde{\eta}_{ub}\triangleq\frac{\sigma_0^2}{\sigma_0^2+\sigma_x^2 \tilde{\rho}_{1}^2}$, and $\tilde{\rho}_{1}$ and $\tilde{\rho}_{M_1}$ are the largest and smallest singular values of $\mPhi_s \mH$, respectively. Since the eigenvalues of $\mH \mH^\tr \mPhi_s^\tr \mPhi_s$ are smaller than the corresponding eigenvalues of $\mH \mH^\tr$ \cite[Theorem 6.76]{seber08}, the performance is lower than those indicated in (\ref{eq:pdlbKnown}), (\ref{eq:pdubKnown}) and (\ref{eq:pdPrecKnown}). 

When the variance of noise is unknown, on the other hand, we cannot write the probability of detection of the detector based on $F$-distribution as the outputs of $\mPhi_s$ and $\mPhi_o$ measurement devices are not independent of each other, which is in turn because the rows of $\mPhi_s$ and $\mPhi_o$ are not any longer  the singular vectors of $\mH$. However, we can still design a constant false alarm rate detector since in the absence of signal term $\mH \vx$ the output of these two measurement devices are still independent as $\mPhi_s \mPhi_o^\tr=\mZero_{M_1,M_2}$.

%\begin{thm}
%For large number of blocks $\Nb$ and fixed precision budget, the performance of the compressive detector in terms of probability of detection for a fixed probability of false alarm, deteriorates with increasing the number of measurements.
%\end{thm}
%\begin{proof}
%To prove the theorem, we first notice that the degrees of freedom of numerator and denominator in (\ref{eq:testStat}) are both multiplicatives of the number of blocks $\Nb$, which means that when $\Nb$ increases both degrees of freedom increase as well. We now use the Normal approximation of the $F$-distributions with large degrees of freedom to prove the theorem.
%We use the following Normal approximation of $F$-distribution which states that if $\ff\sim F(\nu_1,\nu_2)$, then \cite{johnson95, walck07}
%\beq
%\xx = \frac{(1-\frac{2}{9\nu_2})\ff^{1/3}-(1-\frac{2}{9\nu_1})}{\sqrt{\frac{2\ff^{2/3}}{9\nu_2}+\frac{2}{9\nu_1}}} \sim \cN(0,1)
%\label{eq:normalApprox}
%\eeq
%Now let denote $\Delta P \triangleq \pdet - \pfa$, where $\pfa$ and $\pdet$ are the probability of detection and probability of false alarm of the proposed Compressive Detector, respectively.
%\begin{lem}
%For large values of $\nu_1$ and $\nu_2$, $\xx$ in (\ref{eq:normalApprox}) is a decreasing function of $\ff$.
%\end{lem}
%\begin{proof}
%The proof is 
%\end{proof}
%\end{proof}

% %%%%%%%%%%%%%%%%%%%%%%%%%%%%%%%%%%%%%%%%%%%%%%%%%%%
% %%%%%%%%%%%%%%%%%%%          SECTION V            %%%%%%%%%%%%%%%%%%%
% %%%%%%%%%%%%%%%%%%%%%%%%%%%%%%%%%%%%%%%%%%%%%%%%%%%

\section{Compressive Detection on a Finite Union of Subspaces}
\label{sec:union}

In this section, we generalize the results of previous sections to the case when we know that the $\mH \in \sS$ where
\beq
\sS = \{\mH_1,\mH_2,\ldots,\mH_Q\},
\label{eq:subspacesSet}
\eeq
where $\mH_q,~q=1,\ldots,Q,$ is a full-column rank $N \times K_i$ matrix and the probability that the signal lies in subspace $\mH_q$ is $\bP(\mH=\mH_q)=\pr_q,~q=1,\ldots,Q$. The goal is then to accomplish the test of (\ref{eq:test1}).

Similar to the case of $Q=1$, the measurement matrix is tailored to performing the $F$-test. In other words, it must consist of two parts: one lying in the strongest signal subspace, which attains the maximum energy, and the other in the weakest subspace, which attains the minimum energy. But, since the true $q$ is not known a priori, we should design it such that the expected value of the energy over the set $\sS$ gets its maximum value for the first part and its minimum value for the second part.

To be more precise, if we adopt the Maximally-Uncorrelated strategy, the first and the second sets of measurement devices will be chosen as the solutions of the optimization problems
\beqa
\arg \max_{\mPhi_s}\Exp_{\mH,\vx,\vw}(\vz_s^\tr \vz_s), \mathrm{~s.t.~} M \mPhi_s \mPhi_s^\tr = \mI_{M_1},
\label{eq:maxMulti}
\eeqa
and
\beqa
\arg \min_{\mPhi_o}\Exp_{\mH,\vx,\vw}(\vz_o^\tr \vz_o), \mathrm{~s.t.~} M \mPhi_o \mPhi_o^\tr = \mI_{M_2},
\label{eq:minMulti}
\eeqa
respectively. On the other hand if we want to design it based on Fully-Correlated strategy we have to relax the constraints $M \mPhi_s \mPhi_s^\tr = \mI_{M_1}$ and $M \mPhi_o \mPhi_o^\tr = \mI_{M_2}$ in (\ref{eq:maxMulti}) and (\ref{eq:minMulti}).
This will guarantee that the energy of the signals collected at the first and second channels attain, respectively, the maximum and minimum possible values which can then be exploited for optimum compressive signal detection.% by using the $F$-test.
\begin{thm}
The first and second channels of measurement matrix which satisfy (\ref{eq:maxMulti}) and (\ref{eq:minMulti}) are respectively the strongest $M_1$ eigenvectors and the weakest $M_2$ eigenvectors of the following matrix
\beq
\mH_{\mathrm{eq}} = \sum\limits_{q=1}^{Q} \pr_q \mH_q \mH_q^\tr,
\label{eq:Heq}
\eeq
%The $F$-test detector is guaranteed to exist if $\sum\limits_{q=1}^{Q} K_i < N$.
\end{thm}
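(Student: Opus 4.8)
The plan is to reduce this multi-subspace problem to the single-subspace situation already settled in Theorem \ref{thm0}, since the only new ingredient is the additional averaging over the random choice of $\mH$ from the set $\sS$. First I would repeat the opening computation of the proof of Theorem \ref{thm0}, but now carrying the expectation over $\mH$ as well. Conditioning on $\mH$ and using independence of $\vx$ and $\vw$, the expected output energy of the first channel is
\beq
\Exp_{\mH,\vx,\vw}(\vz_s^\tr \vz_s) = \sigma_x^2\, \Exp_{\mH}\!\big[\trace(\mH \mH^\tr \mPhi_s^\tr \mPhi_s)\big] + \frac{M_1 \sigma_0^2}{M},
\eeq
exactly as in (\ref{eq:th1proof1}), except that the averaging over $\mH$ has not yet been applied.

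The key step is then to push the expectation through the (linear) trace and to observe that only $\mH \mH^\tr$ depends on $\mH$. Since $\mPhi_s$ is deterministic,
\beq
\Exp_{\mH}\!\big[\trace(\mH \mH^\tr \mPhi_s^\tr \mPhi_s)\big] = \trace\!\Big( \Exp_{\mH}[\mH \mH^\tr]\, \mPhi_s^\tr \mPhi_s \Big) = \trace(\mH_{\mathrm{eq}} \mPhi_s^\tr \mPhi_s),
\eeq
where I have used $\Exp_{\mH}[\mH \mH^\tr] = \sum_{q=1}^{Q} \pr_q \mH_q \mH_q^\tr = \mH_{\mathrm{eq}}$, which follows directly from $\bP(\mH = \mH_q) = \pr_q$. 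The additive noise term $M_1 \sigma_0^2 / M$ is independent of the design, so the maximization in (\ref{eq:maxMulti}) is equivalent to maximizing $\trace(\mH_{\mathrm{eq}} \mPhi_s^\tr \mPhi_s)$ subject to $M \mPhi_s \mPhi_s^\tr = \mI_{M_1}$; likewise (\ref{eq:minMulti}) becomes the minimization of $\trace(\mH_{\mathrm{eq}} \mPhi_o^\tr \mPhi_o)$ subject to $M \mPhi_o \mPhi_o^\tr = \mI_{M_2}$.

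These are precisely the problems solved in Theorem \ref{thm0}, with the symmetric positive-semidefinite matrix $\mH \mH^\tr$ replaced by the symmetric positive-semidefinite matrix $\mH_{\mathrm{eq}}$. I would finish by invoking the eigendecomposition $\mH_{\mathrm{eq}} = \mU_{\mathrm{eq}} \mLambda_{\mathrm{eq}} \mU_{\mathrm{eq}}^\tr$ and applying the von Neumann trace theorem \cite[P 11.4.5]{raorao},\cite[Theorem 6.77]{seber08} exactly as before: since $M \mPhi_s^\tr \mPhi_s$ is an orthogonal projector with $M_1$ unit eigenvalues and $N - M_1$ zero eigenvalues, the trace is maximized when the range of $\mPhi_s^\tr$ is the span of the $M_1$ dominant eigenvectors of $\mH_{\mathrm{eq}}$, and the $\mPhi_o$ objective is minimized by the span of the $M_2$ weakest eigenvectors. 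I expect no genuine obstacle here; the only point needing care is the interchange of $\Exp_{\mH}$ with the trace, which is justified because the expectation over the finite set $\sS$ is a finite weighted sum and $\mPhi_s$ carries no dependence on $\mH$. A secondary bookkeeping point is to confirm that the dominant and weakest eigenvector index sets are disjoint, which holds under the analog of the eigenvalue-gap assumption (\ref{ineq:condnum}) applied to $\mH_{\mathrm{eq}}$.
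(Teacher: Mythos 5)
Your proposal is correct and follows essentially the same route as the paper's own proof: both push the expectation over $\mH$ through the linear trace to replace $\mH\mH^\tr$ by $\mH_{\mathrm{eq}}=\sum_q \pr_q \mH_q\mH_q^\tr$, discard the design-independent noise term, and then apply the von Neumann trace theorem exactly as in Theorem \ref{thm0}. The only (immaterial) differences are your explicit remarks about interchanging expectation and trace and about disjointness of the eigenvector index sets, which the paper leaves implicit.
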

\begin{proof}
It is easy to see that
\beqa
\Exp_{\mH,\vx,\vw}(\vz_s^\tr \vz_s) &=& \Exp_{\mH} \Big( \sigma_x^2 \trace(\mH \mH^\tr \mPhi_s^\tr \mPhi_s) + \frac{\sigma_0^2}{M} \Big) \nonr \\
&=& \sum\limits_{q=1}^Q  \pi_q \sigma_x^2 \trace(\mH_q \mH_q^\tr \mPhi_s^\tr \mPhi_s) + \frac{\sigma_0^2}{M}. \nonr \\
&=& \sigma_x^2 \trace(\mH_{\mathrm{eq}} \mPhi_s^\tr \mPhi_s) + \frac{\sigma_0^2}{M}. \nonr \\
\label{eq:multiHproof}
\eeqa
Since the second term in the right-hand side of (\ref{eq:multiHproof}) is independent of the choice of the measurement matrix, the measurement sub-matrix $\mPhi_s$ that maximizes (\ref{eq:multiHproof}) is the one maximizing the first term, which contains the strongest $M_1$ eigenvectors of $\mH_{\mathrm{eq}}$ as its rows \cite[P 11.4.5]{raorao}. The proof of the weakest $M_2$ eigenvectors of $\mH_{\mathrm{eq}}$ being the solution of (\ref{eq:minMulti}) is similar.
\end{proof}
{\revcoltwo
{\bf Connection to the compressive detection of sparse signals}: The ordinary compressive detection of random sparse signals can be specified as a particular case of the setting in this section. To see this, assume that we have a $K$-sparse signal in $N$-dimensional basis $\Omegab$ as
\beq
\vs = \Omegab \vxx,
\label{eq:finiteUnionToCS}
\eeq
where $\vs \in \bR^{N}$ and $\vxx \in \bR^{N}$ are both vectors of size $N$ and only $K \ll N$ entries of $\vxx$ are nonzero. We assume that the $K$ nonzero entries of $\vxx$ are i.i.d with Gaussian distribution. Denoting the true unknown support of $\vxx$ by $\sI$, the signal model in (\ref{eq:finiteUnionToCS}) is equivalent to 
\beq
\vs = \Omegab_{\sI} \vx,
\label{eq:finiteUnionToCS2}
\eeq
where $\vx \in \bR^{K}$ and $\Omegab_{\sI} \in \bR^{N \times K}$ contains the columns of $\Omegab$ corresponding to index set $\sI$. In the absence of any further structure or knowledge of $\sI$, any $K$ entries of $\vxx$ can be its true support set equally likely with probability ${1}/{{N \choose K}}$. In this case, the members of the union set $\sS$ in (\ref{eq:subspacesSet}) are the $Q={N \choose K}$ size-$K$ combinations of the $N$ columns of $\Omegab$. Furthermore, we will have $\pi_1=\ldots=\pi_Q=\frac{1}{Q}$.

If the basis $\Omegab$ is orthonormal, then it is easy to verify from (\ref{eq:Heq}) that
\beq
\mH_{\mathrm{eq}} = \frac{K}{N} \mI_N.
\label{eq:HeqIn}
\eeq
Since all the singular values of $\mH_{\mathrm{eq}}$ in (\ref{eq:HeqIn}) are equal, then $\mH_{\mathrm{eq}}$ has no {\it structure} and therefore the methods proposed in Sections \ref{sec:maxrank} and \ref{sec:maxsnr} are not suitable for detection of this signal\footnote{Notice that in the absence of noise variance knowledge, none of the methods in the literature can detect the signal.} but we can use the method of Section \ref{sec:known} provided that the noise variance is known. However, if there is a known structure in the sparse random signal model of (\ref{eq:finiteUnionToCS}), for instance if the probability of different $K$-size combinations are not equal or if $\Omegab$ is not an orthonormal basis, in the sense that not all of the singular values of $\mH_{\mathrm{eq}}$ are identical, then the methods proposed in Sections \ref{sec:maxrank} and \ref{sec:maxsnr} can be employed for detection of signal from compressive measurements.}

% %%%%%%%%%%%%%%%%%%%%%%%%%%%%%%%%%%%%%%%%%%%%%%%%%%%
% %%%%%%%%%%%%%%%%%%%          SECTION VI            %%%%%%%%%%%%%%%%%%%
% %%%%%%%%%%%%%%%%%%%%%%%%%%%%%%%%%%%%%%%%%%%%%%%%%%%

\section{Simulation Results}
\label{sec:simulation}
In this section we study the performance of the proposed compressive detection methods via computer simulation studies. 

The first experiment is to compare the performance of the two methods proposed in Sections \ref{sec:maxrank} and \ref{sec:maxsnr} which are able to detect the signal under unknown variance condition. In this experiment we choose $N=1000$ and $K=10$. The elements of matrix $\mH$ are generated independently from $\cN(0,1/\sqrt{K})$ such that the expected value of the squared second norm of its rows is one. Signal $\vx$ and noise $\vw$ are both drawn from zero-mean Gaussian distribution with variances $\sigma_x^2$ and $\sigma_0^2$, respectively. We define the SNR as $10 \log_{10} \frac{\sigma_x^2}{\sigma_0^2}$ (in dB). The number of measurement devices is $M \in \{2,4,8\}$ and we choose $M_1=M_2=M/2$. The results are obtained from 10000 random trials  to gather sufficient statistics for reliable evaluation of the performance.

Figure \ref{fig:pd1} illustrates the performance in terms of probability of detection as a function of $\frac{\sigma_x^2}{\sigma_0^2}$.  The curves have been depicted for two cases of $\Nb=5$ and $\Nb=50$. As it can be observed, while Maximally-Uncorrelated compressive detector provides better performance for smaller $\Nb$ at high SNRs, it becomes inferior to Fully-Correlated detector for higher $\Nb$'s and at lower SNRs. This is also clear from Figure \ref{fig:roc1} where the performance in terms of Receiver Operating Characteristics (ROC) has been depicted. As we can see, again Maximally-Uncorrelated compressive detector outperforms Fully-Correlated method for high SNRs and small $\Nb$'s, while Fully-Correlated method works better in small SNRs and for higher number of measurement devices.

\begin{figure}[t!]
\begin{center}
\includegraphics[width=0.7\textwidth,height=0.5\textwidth]{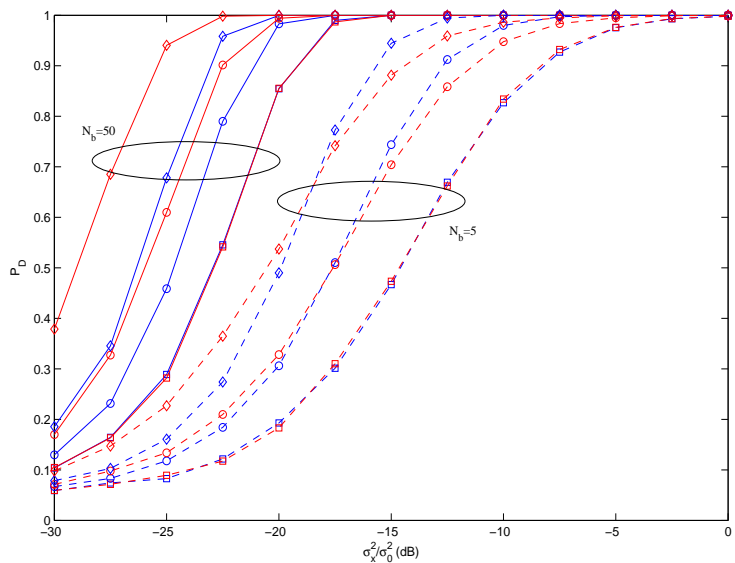}
\caption{Probability of Detection as a function of SNR for Maximally-Uncorrelated compressive detector (Blue) versus Fully-Correlated method (Red) when $\pfa=0.05$. Solid curves represent the case when $N_b=50$, and dashed curves represent the case when  $N_b=5$.  The number of measurement devices is $M=2$ (square), $M=4$ (circle), and $M=8$ (diamond). Furthermore we choose $M_1=M_2=M/2$.}
\label{fig:pd1}
\end{center}
\end{figure}

\begin{figure}[h!]%[htbp]
\begin{center}
\includegraphics[width=0.7\textwidth,height=0.5\textwidth]{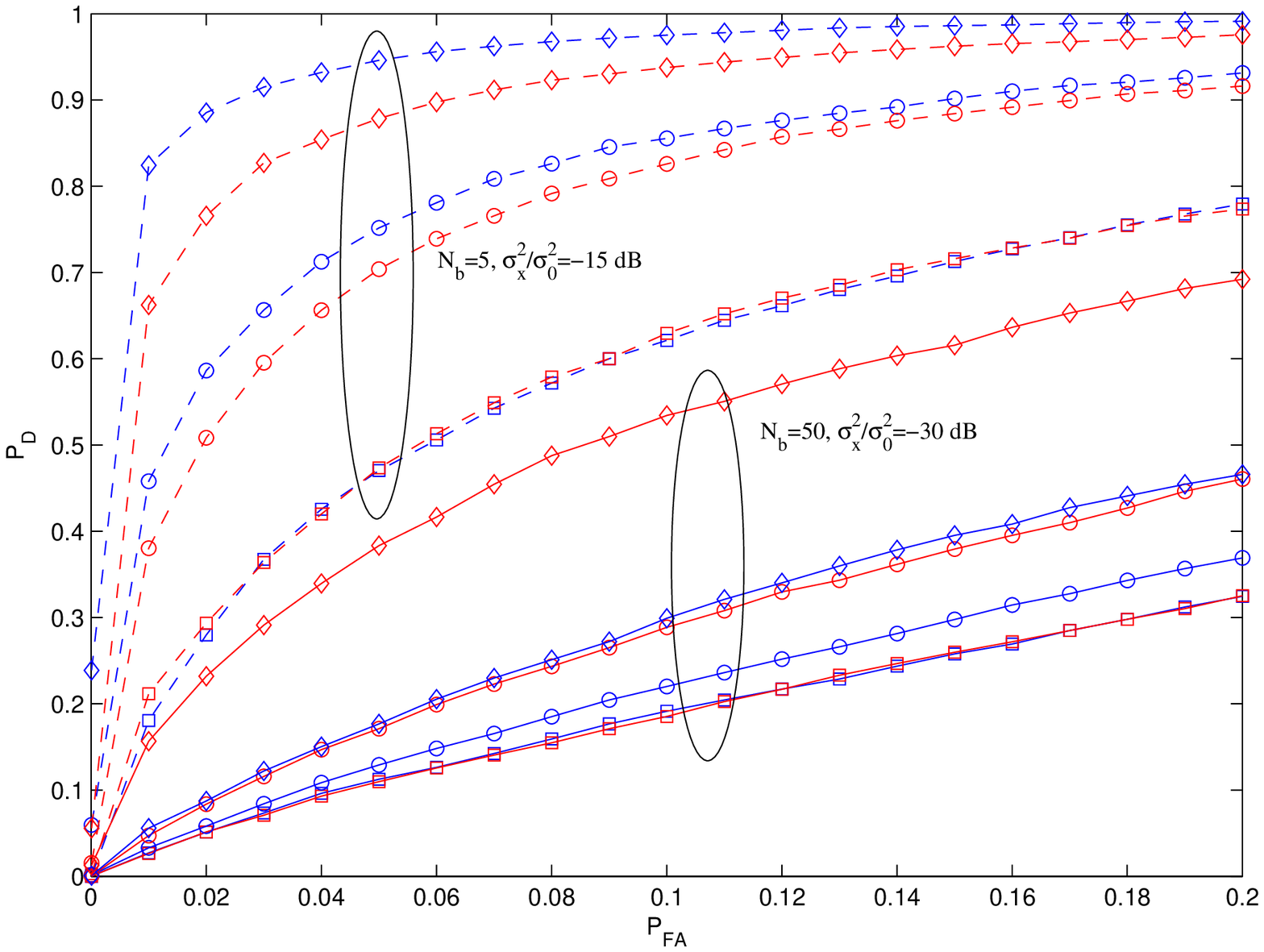}
\caption{ROC for Maximally-Uncorrelated compressive detector (Blue) versus Fully-Correlated method (Red). Solid curves represent the case when $N_b=50$ and ${\sigma_x^2}/{\sigma_0^2}=-30$ dB, and dashed curves represent the case when  $N_b=5$ and ${\sigma_x^2}/{\sigma_0^2}=-15$ dB.  The number of measurement devices is $M=2$ (square), $M=4$ (circle), and $M=8$ (diamond). Furthermore we choose $M_1=M_2=M/2$.}
\label{fig:roc1}
\end{center}
\end{figure}

The second experiment is to study the performance loss caused by the lack of knowledge about the noise variance. Figures \ref{fig:known1} and \ref{fig:known2} show the performance of the proposed Maximally-Uncorrelated detection method for two cases: when the noise variance is known and when it is unknown. As we expect, the lack of knowledge of noise variance results in considerable performance degradation.  However, reasonable performance is still obtained.

\begin{figure}[t!]
\begin{center}
\includegraphics[width=0.7\textwidth,height=0.5\textwidth]{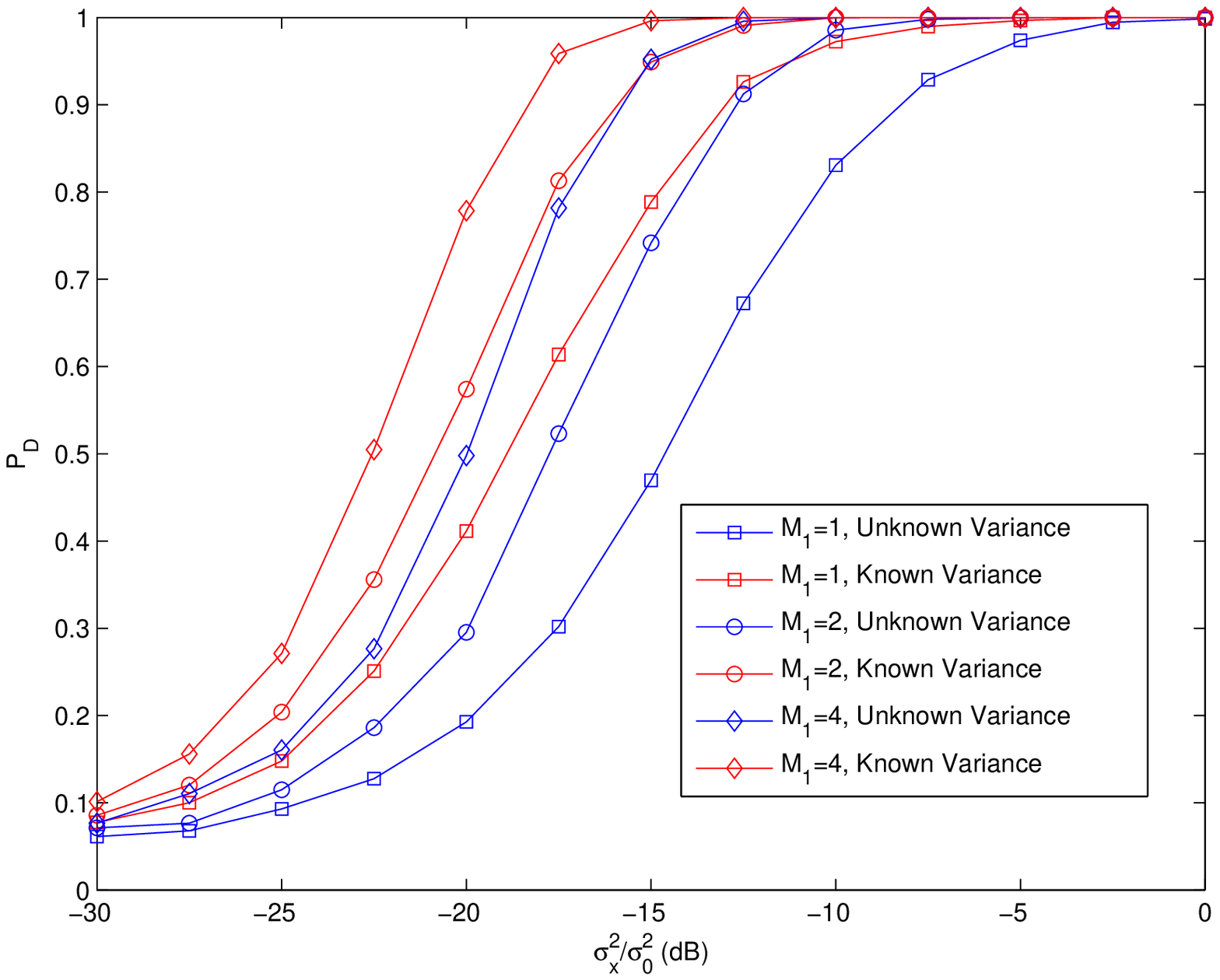}
\caption{Probability of Detection as a function of SNR for Maximally-Uncorrelated compressive detector under unknown variance (blue) versus known variance (red) when $\pfa=0.05$. The number of measurements over time is  $N_b=5$. The number of rows of $\mPhi_s$ is $M_1=1$ (square), $M_1=2$ (circle), and $M_1=4$ (diamond). When the variance is known we have $M=M_1$ and when it is unknown we choose $M_2=M_1$ and therefore we have $M=2 M_1$.}
\label{fig:known1}
\end{center}
\end{figure}

\begin{figure}[th!]%[htbp]
\begin{center}
\includegraphics[width=0.7\textwidth,height=0.5\textwidth]{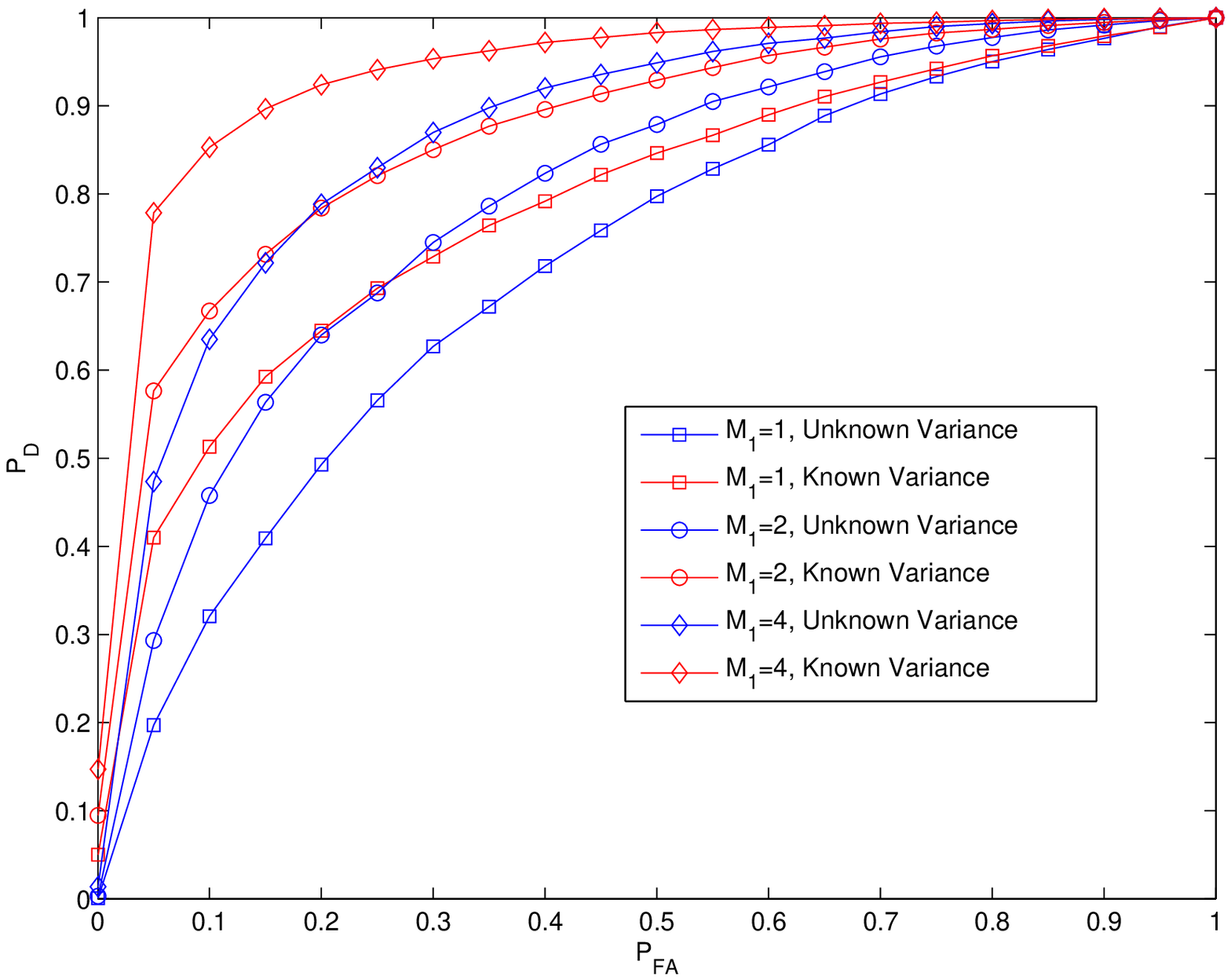}
\caption{ROC for Maximally-Uncorrelated detector for unknown variance (blue) versus known variance (red) when $N_b=5$ and ${\sigma_x^2}/{\sigma_0^2}=-20$ dB. The number of rows of $\mPhi_s$ is $M_1=1$ (square), $M_1=2$ (circle), and $M_1=4$ (diamond). When the variance is known we have $M=M_1$ and when it is unknown we choose $M_2=M_1$ and therefore we have $M=2 M_1$.}
\label{fig:known2}
\end{center}
\end{figure}

In the third experiment we study the effect of imprecise measurements and compensating it by deploying multiple measurement devices. The imprecision is modeled as described in Section \ref{subsec:imprecise}. Figure \ref{fig:roc2} illustrates the ROC curves for precise (red curves) and imprecise (blue curves) measurements when ${\sigma_x^2}=1$, and ${\sigma_0^2}=100$. The precision of all imprecise devices (blue curves) is chosen as $\delta^{-1}=100$. As expected and is clear from the comparison of the red curve with blue curve with $\nk=1$, in equivalent number of measurement devices, the imprecision in measurements results in a considerable loss in performance. However, this can be compensated by employing higher number of measurement devices. For instance, when $\nk=3$, the performance becomes equivalent to the precise case. This can be also verified by (\ref{eq:L}) in which the lower bound for performance improvement is $\nk=3$. When we go beyond this bound, e.g. $\nk=5$, the performance improves compared to the precise case.

Figure \ref{fig:pd2} shows how the probability of detection changes with the noise power for various precisions and hardware budgets of $\nk \in \{1,3\}$. The black curve depicts the performance with a single precise device. Here the performance enhancement with deploying $\nk=3$ times devices has been depicted. As it can be seen for a certain $\delta$, the performance of the case of $\nk=3$ is better than a single precise device as long as $\sigma_0^2<\frac{M}{\nk-1}\delta^{-1}$.  On the other hand,  when the variance of noise is higher than this limit, even employing $\nk$ times devices is not enough for compensating the imprecision.

\begin{figure}[t!]%[htbp]
\begin{center}
\includegraphics[width=0.7\textwidth,height=0.5\textwidth]{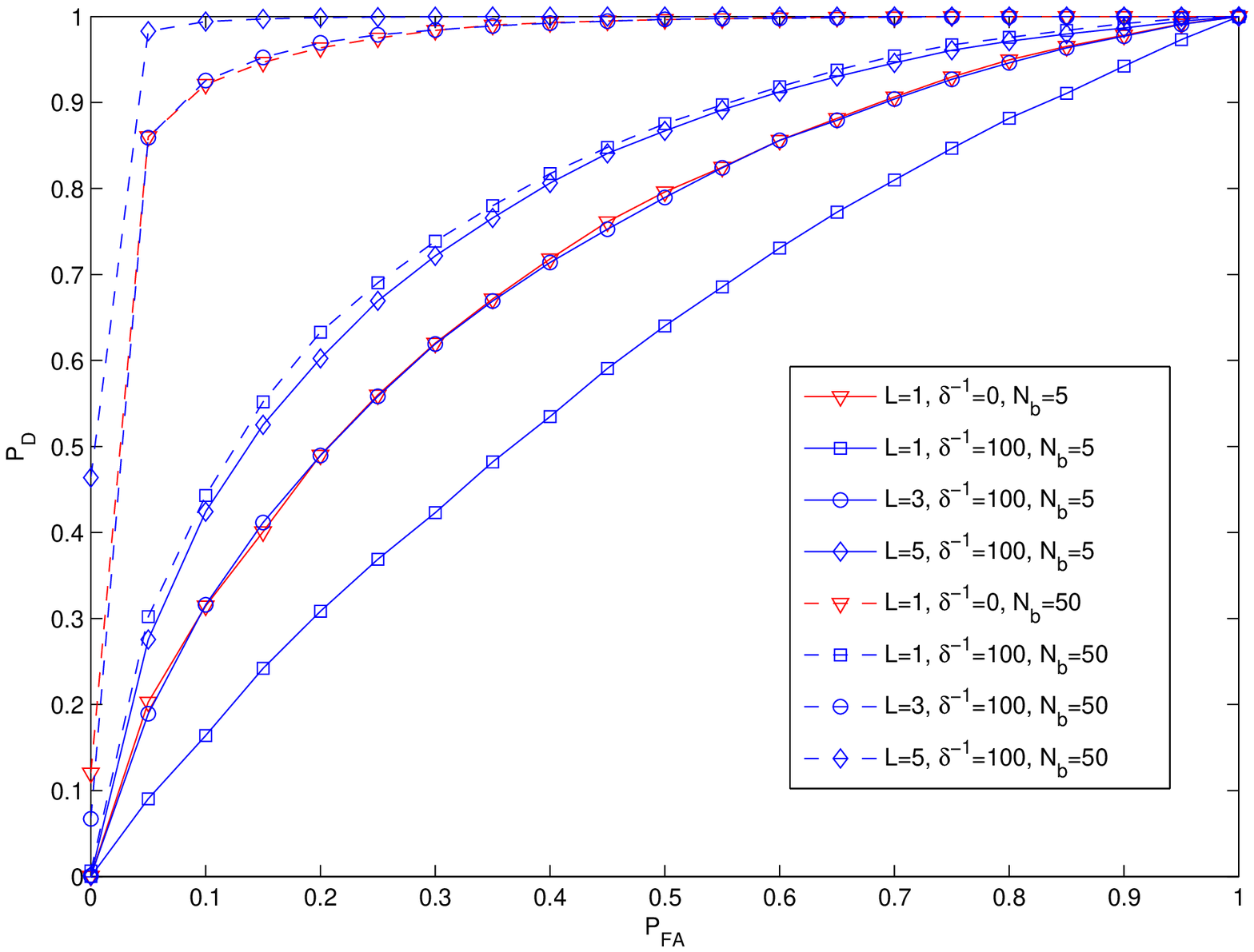}
\caption{ROC for imprecise measurements (blue curves) versus precise measurements (red curves). Solid curves represent the case when $N_b=5$ and dashed curves represent the case when  $N_b=50$. We have chosen ${\sigma_x^2}=1$, and ${\sigma_0^2}=100$.  The number of measurement devices in precise case is $M=2$ with $M_1=M_2=1$. For imprecise measurements, the device multiplicity has been chosen as $\nk \in \{1,3,5\}$.}
\label{fig:roc2}
\end{center}
\end{figure}

\begin{figure}[t!]%[htbp]
\begin{center}
\includegraphics[width=0.7\textwidth,height=0.5\textwidth]{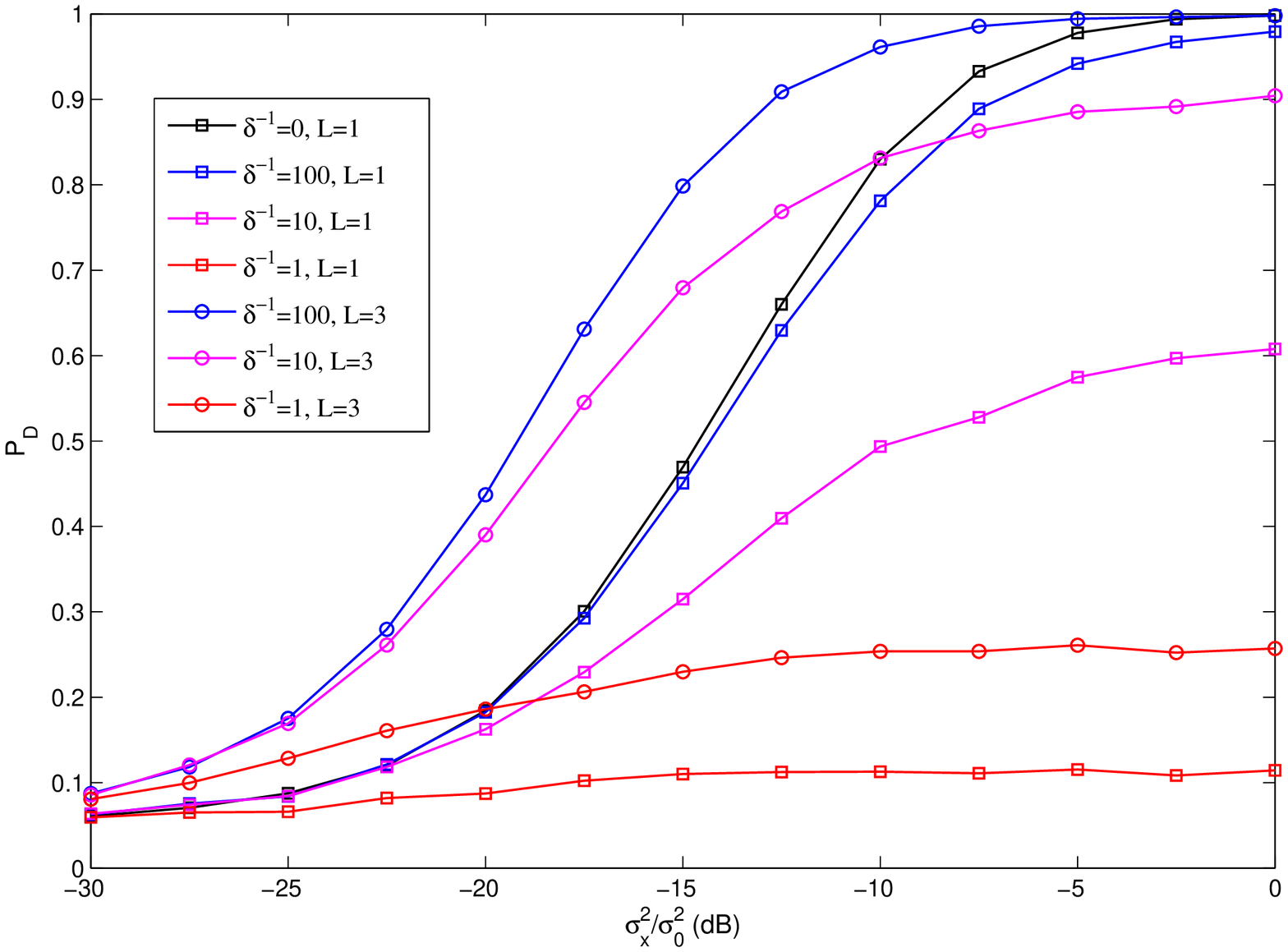}
\caption{Probability of detection for different precisions and device multiplicity values. The number of measurements per device is $N_b=5$. We have chosen ${\sigma_x^2}=1$.}
\label{fig:pd2}
\end{center}
\end{figure}

In the fourth experiment we study the performance of the proposed compressive detection method when the signal comes from a union of some subspaces with certain probabilities. For this experiment, the ambient dimension is $N=20$ and the signal dimension is $K=3$. In each random trial we first generate an $N \times N$ matrix with entries from a normal distribution with mean zero and variance $1/K$. The elements of union set $\sS$ in (\ref{eq:subspacesSet}) are then $Q$ different $K$-size combinations of columns of the generated matrix while the signal is generated from one of them which is chosen uniformly randomly from all $Q$ possible combinations. This implies that $\pi_1=\ldots=\pi_Q=1/Q$. The experiment is done for $Q\in\{1,10,50\}$. We remark that the case of $Q=1$ coincides with the basic case described in Section \ref{sec:2} as in this case there is no uncertainty over the subspace on which the signal lies. 

Figure \ref{fig:union1} illustrates the performance in terms of probability of detection for various values of $\sigma_x^2/\sigma_0^2$. The probability of false alarm here is set to $\pfa=0.05$ and the number of measurements is $\Nb=50$. The results are shown for $M \in \{2,4\}$ with $M_1=M_2=M/2$. Figure \ref{fig:union2} also shows the ROC curves when $\sigma_x^2/\sigma_0^2=-10 dB$. As it can be seen, as the cardinality of the subspace union set $\sS$ increases, the performance of the algorithm degrades. This is in turn because as the number of possible subspaces increases, the rows of measurement matrix $\mPhi$ are less matched to the singular vectors of the true subspace matrix.

\begin{figure}[t!]%[htbp]
\begin{center}
\includegraphics[width=0.7\textwidth,height=0.5\textwidth]{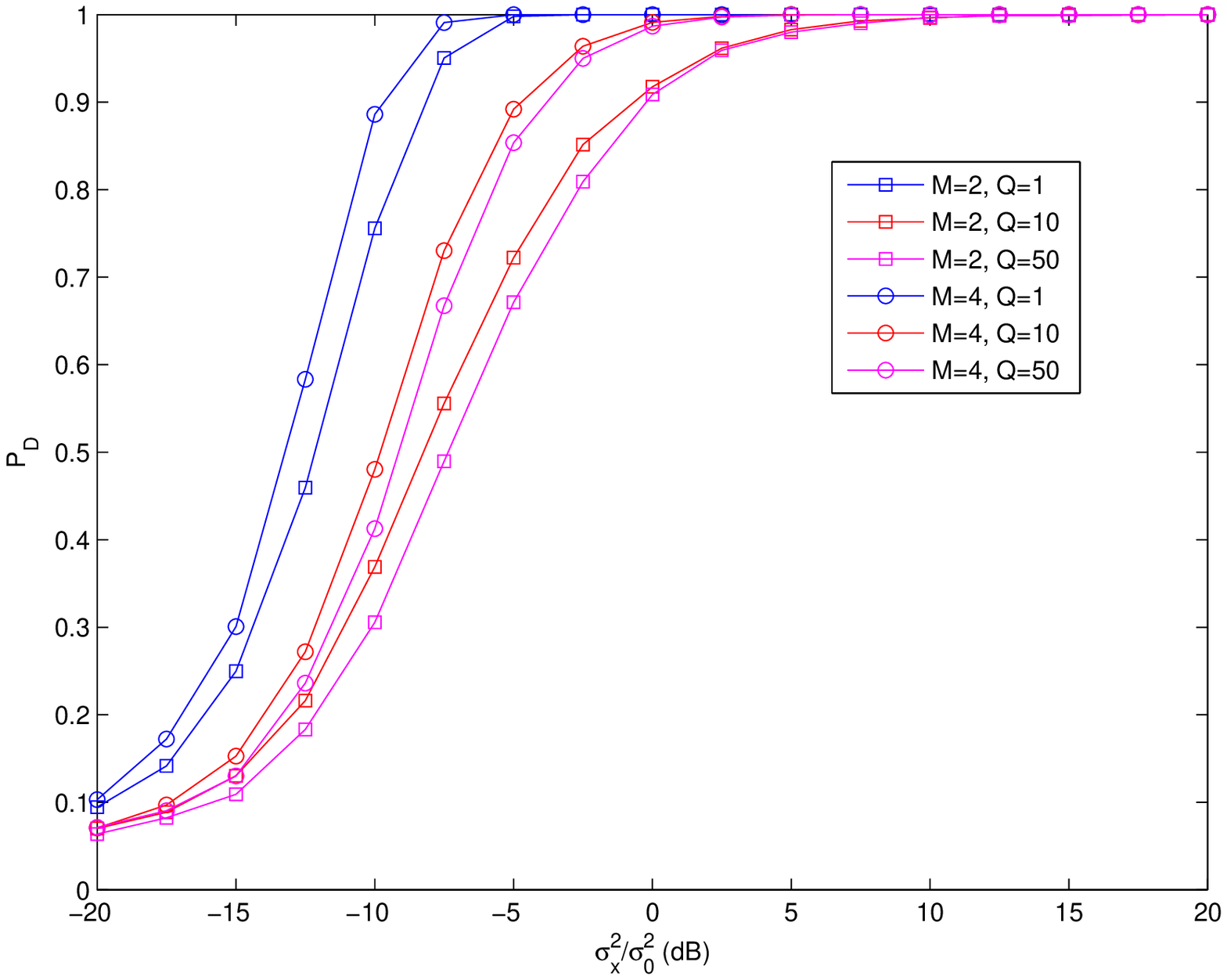}
\caption{Probability of Detection versus $\sigma_x^2/\sigma_0^2$ (in dB) for various numbers of measurement devices and cardinalities of the union sets. The number of measurements in time is $\Nb=50$.}
\label{fig:union1}
\end{center}
\end{figure}

\begin{figure}[t!]%[htbp]
\begin{center}
\includegraphics[width=0.7\textwidth,height=0.5\textwidth]{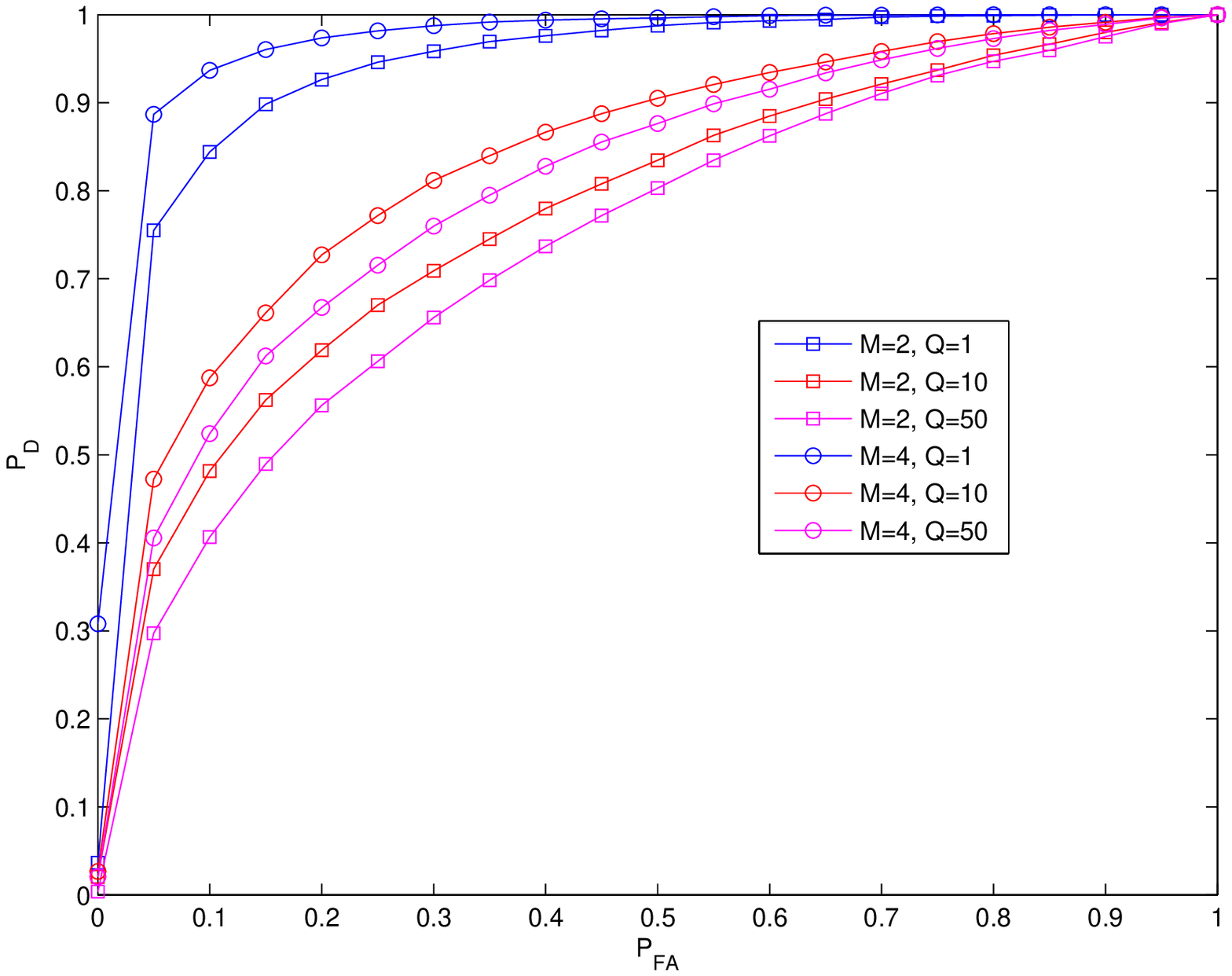}
\caption{ROC curves for various numbers of measurement devices and cardinalities of the union sets. The number of measurements is $\Nb=50$ and the SNR is $\sigma_x^2/\sigma_0^2=-10$ dB.}
\label{fig:union2}
\end{center}
\end{figure}

\section{Conclusion}
\label{sec:conclusion}
The problem of random signal detection from compressive measurements when the signal is lying (or leans toward) a low dimensional subspace was studied. Having the knowledge of the subspace structure, we proposed two measurement designs and formulated the hypothesis test and its performance metrics for each of them in the case that the noise variance is unknown. We also showed how the design can be simplified in the case that the noise variance is known. We analyzed the effects of imprecise measurements and showed how it can be compensated by deploying more identical measurement devices. The problem was also generalized for the case when signal belongs to a union of finite number of subspaces with known probabilities.

\section*{Acknowledgement}
This work was supported by the Finnish Funding Agency for Technology and Innovation (Tekes) under the project "Enabling Methods for Dynamic Spectrum Access and Cognitive Radio (ENCOR)", and the Academy of Finland under the project $\#251138$ "Digitally-Enhanced RF for Cognitive Radio Devices (DECORA)".

% Generated by IEEEtran.bst, version: 1.13 (2008/09/30)

%\begin{thebibliography}
%\bibliographystyle{IEEEtran}
%\bibliography{itinsp}
%\end{thebibliography}

\end{document}